\documentclass[
submission
, nomarks
]{dmtcs-episciences}

\usepackage[utf8]{inputenc}

\synctex=1
\pdfoutput=1

\usepackage[round]{natbib}

\usepackage{bbm}
\usepackage{mathrsfs}
\usepackage{latexsym}
\usepackage{microtype}

\usepackage{suffix}
\usepackage{caption}
\usepackage{wrapfig}

\usepackage{xpunctuate}
\usepackage[all,british]{foreign} 
\redefnotforeign[ie]{i.e\xperiodafter}
\redefnotforeign[eg]{e.g\xperiodafter}

\addtolength{\textfloatsep}{-10pt}

\usepackage{enumitem}

\usepackage{marginnote}

\usepackage[draft,english,silent]{fixme}
\fxsetup{theme=color,layout=marginnote,innerlayout=noinline}

\usepackage{amsmath, amssymb, amsfonts, mathtools} 
\usepackage{amsthm}
\usepackage[full,small]{complexity}

\newtheorem{theorem}{Theorem}[section]
\newtheorem{corollary}{Corollary}[theorem]
\newtheorem{lemma}[theorem]{Lemma}
\newtheorem{definition}[theorem]{Definition}
\newtheorem*{claim}{Claim}


\PassOptionsToPackage{dvipsnames,usenames,table}{xcolor}
\usepackage{tikz}
\usetikzlibrary{calc,matrix,fit,patterns,decorations.pathreplacing,arrows,shapes}
\usepackage{tikz-layers}

\usepackage[longend,vlined,plain]{algorithm2e}

\renewcommand{\G}{\mathbb G}
\newcommand{\X}{\mathbb X}

\newcommand{\N}{\mathbb N}
\renewcommand{\H}{\mathbb H}
\renewcommand{\K}{\mathbb K}
\renewcommand{\S}{\mathbb S}

\newcommand{\subgraph}{\subseteq}
\newcommand{\Count}{\#}

\DeclareMathOperator{\im}{im}
\DeclareMathOperator{\rep}{rep}
\DeclareMathOperator{\aut}{aut}
\DeclareMathOperator{\orb}{orb}
\DeclareMathOperator{\supp}{supp}

\DeclareMathOperator{\jointex}{J}

\newcommand{\ix}{\iota}
\newcommand{\defeq}{\coloneqq}
\newcommand{\any}{\mathord{\color{black!33}\bullet}}

\newcommand{\supply}{\operatorname{\mathsf{supply}}}
\newcommand{\demand}{\operatorname{\mathsf{demand}}}

\newbox\mytempbox
\tikzstyle{embeds} = [->, >=open triangle 45]
\newcommand{\embeds}[1]{%
	\sbox{\mytempbox}{\hbox{\( \scriptstyle\mkern14mu#1\mkern14mu \)}}
	\tikz[baseline=-0.5ex]{%
		\draw[right hook->] (0,0) -- node[midway,above=-0.3ex]{\usebox\mytempbox} (\wd\mytempbox,0);%
	}%
}
\newcommand{\embedsweak}[1]{%
	\sbox{\mytempbox}{\hbox{\( \scriptstyle\mkern14mu#1\mkern14mu \)}}
	\tikz[baseline=-0.5ex]{%
		\draw[right hook->, densely dotted] (0,0) -- node[midway,above=-0.3ex]{\usebox\mytempbox} (\wd\mytempbox,0);%
	}%
}

\definecolor{amazing}{RGB}{254,67,101}
\definecolor{cardinal}{HTML}{BB333C}
\definecolor{niceblack}{HTML}{020300}

\tikzset{%
	vertex/.style={circle,fill=niceblack!15,minimum size=18pt,inner sep=0pt},
	red edge/.style={draw,thick,-,cardinal!80},
	black edge/.style={draw,line width=.8pt,-,niceblack},
	gray edge/.style={draw,line width=.8pt,-,niceblack!15},    
	box/.style={fill,cardinal,inner sep=5pt,rounded corners=15pt}
}

\SetKwInput{KwInput}{Input}
\SetKwInput{KwOutput}{Output}
\SetKw{Continue}{continue}

\SetCommentSty{mycommfont}

\newcommand*\circled[4]{\tikz[baseline=(char.base)]{
		\node[circle,fill=#3,draw=#4,inner sep=1pt,minimum size=1.2em] (char) {\color{#2} #1};}}

\def\commentmark#1{\circled{#1}{gray}{white}{gray}}
\def\comment#1#2{\textcolor{gray}{\commentmark{#1} #2}}

\usepackage{environ}
\newcommand{\repeattheorem}[1]{%
	\begingroup
	\renewcommand{\thetheorem}{\ref{#1}}%
	\expandafter\expandafter\expandafter\theorem
	\csname reptheorem@#1\endcsname
	\endtheorem
	\endgroup
}
\NewEnviron{reptheorem}[1]{%
	\global\expandafter\xdef\csname reptheorem@#1\endcsname{%
		\unexpanded\expandafter{\BODY}%
	}%
	\expandafter\theorem\BODY\unskip\label{#1}\endtheorem
}

\newcommand{\repeatlemma}[1]{%
	\begingroup
	\renewcommand{\thelemma}{\ref{#1}}%
	\expandafter\expandafter\expandafter\lemma
	\csname replemma@#1\endcsname
	\endtheorem
	\endgroup
}
\NewEnviron{replemma}[1]{%
	\global\expandafter\xdef\csname replemma@#1\endcsname{%
		\unexpanded\expandafter{\BODY}%
	}%
	\expandafter\lemma\BODY\unskip\label{#1}\endtheorem
}

\newcommand*\varrule[1][0.4pt]{\leavevmode\leaders\hrule height#1\hfill\kern0pt}



\title{Counting large patterns in degenerate graphs}

\author{%
	Christine Awofeso \and  
	Patrick Greaves \and 
	Oded Lachish \and 
	Felix Reidl \and  
}
\affiliation{%
  Birkbeck College, University of London, UK
}
\keywords{sparse graphs, degeneracy, counting}


\begin{document}
\maketitle
\begin{abstract}
	Subgraph counting is a fundamental algorithmic problem with many applications, including in the analysis of social and biological networks. The problem asks for the number of occurrences of a pattern graph $H$ as a subgraph of a host graph $G$ and is known to
	be computationally challenging: it is $\#W[1]$-hard even when $H$ is restricted to simple structures such as cliques or paths.
	Curticapean and Marx (FOCS'14) show that if the graph $H$ has vertex cover number $\tau$, subgraph counting has time complexity $O(|H|^{2^{O(\tau)}} |G|^{\tau + O(1)})$.
	
	This raises the question of whether this upper bound can be improved for input graphs $G$ from a restricted family of graphs. Earlier work by Eppstein~(IPL'94) shows that this is indeed possible, by proving that when $G$ is a $d$-degenerate graph and $H$ is a biclique of arbitrary size, subgraph counting has time complexity $O(d 3^{d/3} |G|)$. 
		
 	We show that if the input is restricted to $d$-degenerate graphs, the upper bound of Curticapean and Marx can be improved for a family of graphs $H$ that includes all bicliques and satisfies a property we call \emph{$(c,d)$-locatable}. Importantly, our algorithm's running time only has a polynomial dependence on the size of~$H$.

	A key feature of $(c,d)$-locatable graphs $H$ is that they admit a vertex cover of size at most $cd$. We further characterize $(1,d)$-locatable graphs, for which our algorithms achieve a linear running time dependence on $|G|$, and we establish a lower bound showing that counting graphs which are barely \emph{not} $(1,d)$-locatable is already $\#\text{W}[1]$-hard.	

    We note that the restriction to $d$-degenerate graphs has been a fruitful line of research leading to  
    two very general results (FOCS'21, SODA'25) and this creates the impression that we largely understand the complexity of counting substructures in degenerate graphs. 
    However, all aforementioned results have an exponential dependency on the size of the pattern graph~$H$. 
\end{abstract}

\section{Introduction}\label{sec:Intro}

Recall that a graph is $d$-degenerate iff its vertices can be ordered in such a way that every vertex has at most~$d$ neighbours preceding it in the order. 
\cite{bressanDichotomoy2022} recently proved several lower and upper bounds related to counting substructures---subgraphs, induced subgraphs, homomorphisms---in degenerate graphs which, up to some minor factors, provide a characterisation of these problems:
Given a graph~$H$ and a $d$-degenerate graph~$G$, they show that counting
how often~$H$ appears as a subgraph in~$G$ is possible in expected time
$f(|H|,d) |G|^{\im(H)}$ and counting how often it appears as an induced subgraph
in time $f(|H|,d) |G|^{\alpha(H)}$, where $\im(H)$ is the induced matching number and~$\alpha(H)$ the independence number of~$H$. Assuming the exponential time hypothesis (ETH), Bressan and Roth further show almost matching lower bounds%
\footnote{
$f(|H|,d) |G|^{o(\im(H) / \log \im(H))}$ and $f(|H|,d) |G|^{o(\alpha(H)/\log \alpha(H))}$, respectively
}.Previous work had focused on hardness classifications for counting substructures without any degeneracy restrictions on~$G$ (\cite{curticapeanMarxCount2014,chen2008icalp,dalmau2004tcs}), parameterising the treewidth of~$G$ (\cite{marx2010toc}), and the generalised subgraph problem (\cite{roth2024siamcomp}).


Regarding the \emph{linear-time} countability of subgraphs in degenerate graphs,  Bera \etal~\cite{beraCountingSix2020} proved that all graphs~$H$ on at most five vertices can be counted in time~$O(\|G\| d^{|H|-2})$. Moreover, they showed that counting cycles on six or more vertices is not possible in linear time unless the Triangle Detection Conjecture\footnote{
    The conjecture states that there exists a constant~$\delta$ such that
    any algorithm under the word-RAM model takes times at least
    $|G\|^{1+\delta-o(1)}$ in expectation to detect whether~$G$ contains a triangle~\cite{abboudTriangleConjecture2014}.
} (TDC) fails. \cite{bressanDAGwidth2021} designed an algorithm that counts subgraphs in time~$O(f(d,|H|) |G|^{\tau_1(H)})$, where $\tau_1(H)$ is a width-measure called the \emph{DAG treewidth} of~$H$. Regarding lower bounds, \cite{beraCountingDegenerate2022} provided a complete characterisation of the graphs~$H$ that can be counted in linear time in a degenerate host graph as a subgraph/induced subgraph under the TDC\footnote{The exact characterisations are somewhat technical, but they revolve round the existence of~$C_k$, $k \geq 6$, as certain types of substructures in~$H$}. 

These results create the impression that we largely understand the complexity of counting substructures in degenerate graphs. The problem is, in all aforementioned upper bounds the running times have an exponential dependence on~$|H|$. 
However, in the general case (\cf Theorem 2.8 in~\cite{curticapeanMarxCount2014}), if $H$ has vertex cover number $\tau$, subgraph counting can be solved in time $O(|H|^{2^{O(\tau)}} |G|^{\tau + O(1)})$, and much earlier \cite{eppsteinBicliques1994} showed that when $G$ is a $d$-degenerate graph and $H$ is a biclique, subgraph counting can be solved in time $O(d 3^{d/3} |G|)$. 
This leaves the question of whether the upper bound of Curticapean and Marx can be improved for a larger family of graphs than just bicliques.

\vspace*{-6pt}
\paragraph{Our contribution:}
We show that if the input is restricted to $d$-degenerate graphs, the upper bound of Curticapean and Marx can be improved for a family of graphs $H$ satisfying a property we call \emph{$(c,d)$-locatable}, which includes all bicliques.
Specifically, we show that for every such graph~$H$ subgraph counting can done 
in time~$O\big((cd)^2 f(c,d) |H|^{f(c,d)} |G|^c \big)$ where~$f(c,d) = 2^d(cd)^{d+1}$ (Theorem~\ref{thm:count-weak}, Section~\ref{sec:algorithm}) and as a semi-induced subgraph%
\footnote{A pattern $H$ with vertex cover~$S$ and independent set~$T$ appears \emph{semi-induced} in~$G$ if some supergraph~$\hat H$ obtained by only
	adding edges between $T$-vertices in~$H$ appears as an induced subgraph of~$G$.} %
in time $O\big((cd)^{2cd+d} |G|^c \big)$ (Theorem~\ref{thm:count-strong}, Section~\ref{sec:algorithm}). If the host graph~$G$ is bipartite, the time complexity holds for counting induced subgraphs. 
We note that our result is incomparable to Bressan's algorithm, since our algorithm can count patterns with large independent sets which have unbounded DAG treewidth. 

To compare these results to those of Curticapean and Marx, note that the $cd$ in our result bounds the vertex cover number $\tau$ of $H$. Expressing our running time with this parameter, we have that
$O\big((cd)^2 f(c,d) |H|^{f(c,d)} |G|^c \big)$ translates to
$O\big((\tau)^2 f(\tau,d) |H|^{f(\tau,d)} {|G|^{\tau/d}} \big)$ where~$f(\tau,d) = 2^d(\tau)^{d+1}$ and
$O\big((cd)^{2cd+d} |G|^c \big)$ translates to $O\big((\tau)^{2\tau+d} {|G|^{\tau/d}} \big)$, so in both cases this improves the exponent of~$|G|$ by a factor of~$1/d$. 

We are specifically interested in~$(1,d)$-locatable patterns as these can be counted with only a linear dependence on~$|G|$ in the running time.
We characterise these patterns (Section~\ref{sec:structure}) and supplement our positive algorithmic results with a lower bound that shows that even counting patterns very close to being~$(1,d)$-locatable is already $\#W[1]$-hard (Section~\ref{sec:lower-bound}). Note that we explicitly link the degeneracy~$d$ to the pattern via the locatability property, our setting is different from the characterisation by \cite{beraCountingDegenerate2022}, who ask which patterns can be counted in linear time for \emph{all}~$d$. This means that~$(1,d)$-locatable patterns can be counted in linear time in graphs of degeneracy~$\leq d$, but might not be in graphs of degeneracy~$> d$.

\section{Preliminaries}\label{sec:prelims}

\marginnote{$\any$}
We will often use the placeholder $\any$ for variables whose value is irrelevant in the given context.

\marginnote{$[\any]$, $\X$, $<_\X$, $\max_\X$, $\min_\X$, $\supp(\any)$}%
\noindent
For an integer $k$, we use $[k]$ as a short-hand for the set $\{0, 1, 2, \ldots, k-1\}$. We use blackboard bold letters like~$\X$ to denote totally ordered sets, that is, some underlying set~$X$ associated with a 
total order~$<_\X$. We further, for~$Y \subseteq X$, use the notations~$\max_\X Y$ to mean the maximum member in~$Y$ under~$<_\X$ and the similarly defined~$\min_\X Y$. If the context allows it, we will sometimes drop this subscript, \eg we shorten $\max_\X X_1 <_\X \max_\X X_2$ to~$\max X_1 <_\X \max X_2$ or~$\max_\X \X$ to simply~$\max \X$. Finally, for a real-valued function~$f$ we write~$\supp(f)$ to denote its support, \ie the set of all values for which~$f$ is non-zero. 

\marginnote{$\ix_\X$}
The 
\emph{index function}~$\ix_\X \colon X \to \N$ maps elements of~$X$ to their
corresponding position in~$\X$, where~$\X$ is the set $X$ imbued with some linear order. We extend this function to sets via
$\ix_\X(S) = \{ \ix_\X(s) \mid s \in S \}$. For any integer~$i \in [|X|]$ we 
write $\X[i]$ to mean the $i$th element in the ordered set.
An \emph{index set}~$I$ for~$\X$ is simply a subset of~$[|X|]$ and we extend 
the index notation to sets via~$\X[I] \defeq \{ \X[i] \mid i \in I \}$.

\marginnote{$|G|$, $\|G\|$, $\delta$, $\aut$}
For a graph $G$ we use $V(G)$ and $E(G)$ to refer to its vertex- and edge-set,
respectively and $\delta(G)$ to denote its minimum degree. We use the shorthands $|G| \defeq |V(G)|$ and $\|G\| \defeq |E(G)|$. We write $\aut(G)$ for the number of automorphisms of~$G$.

\marginnote{$\G$, ordered graph}
An \emph{ordered graph} is a pair $\G = (G, <)$ where $G$ is a graph and $<$ a
total ordering of $V(G)$. We write $<_\G$ to denote the ordering for a given
ordered graph and extend this notation to the derived relations $\leq_\G$,
$>_\G$, $\geq_\G$. For simplicity we will call $\G$ an \emph{ordering of} $G$.

\marginnote{$N^-, \Delta^-$}
We use the same notations for graphs and ordered graphs, additionally we write
$N_\G^-(u) \defeq \{ v \in N(u) \mid v <_\G u \}$ for the \emph{left neighbourhood} of a vertex $u \in \G$. We write $N_\G^-[u] \defeq N_\G^-(u) \cup \{u\}$ for the closed left neighbourhood which we extend to vertex sets~$X$ via $N_\G^-[X] \defeq \bigcup_{u \in X} N_\G^-[u] \cup X$. We write~$\Delta^-(\G) \defeq \{ |N_\G^-(u)| \mid u \in \G \}$ to denote the maximum left-degree of an ordered graph.


\marginnote{isomorphic}
Two ordered graphs~$\G_1, \G_2$ are \emph{isomorphic} if the function that maps the $i$th vertex of~$\G_1$ to the $i$th vertex of~$\G_2$ is a graph isomorphism. We write $\G_1 \simeq \G_2$ to indicate that two ordered graphs are isomorphic.

\marginnote{degeneracy}
A graph~$G$ is \emph{$d$-degenerate} if there exists an ordering $\G$ such that
$\Delta^-(\G) \leq d$. 
We say an ordering $\G$ of a graph $G$ is $d$-degenerate if   $\Delta^-(\G) \leq d$, and just \emph{degenerate} if $\Delta^-(\G)$ is minimum over all orders of $G$. The degeneracy ordering of a graph can be computed in time $O(|G| + \|G\|)$ in general and~$O(d|G|)$ for $d$-degenerate graphs (\cite{matulaDegeneracy1983}).

\vspace*{-6pt}
\paragraph*{The degeneracy toolkit.}

We will make use of the following data structures for degenerate graphs by Drange \etal:
\begin{lemma}[\cite{drangeComplexityDegenerate2023}]\label{lemma:R}
  Let $\G$ be an ordered graph with degeneracy $d$. Then in time $O(d2^d n)$
  we can compute a subset dictionary $R$ over~$V(G)$ which for any $X \subseteq
  V(G)$ answers the query
  $
    R[X] \defeq \big| \{ v \in G \mid X \subseteq N^-(v) \} \big|
  $
  in time~$O(|X|)$.
\end{lemma}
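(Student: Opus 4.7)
The plan is to populate $R$ by iterating over the vertices of~$G$ and enumerating every subset of each vertex's left-neighbourhood. Since~$\Delta^-(\G) \leq d$, we have~$|N_\G^-(v)| \leq d$ for every~$v \in V(G)$, so there are at most~$2^d$ subsets of~$N_\G^-(v)$. For each vertex~$v$ I would enumerate all~$X \subseteq N_\G^-(v)$ and increment a counter indexed by~$X$ in a dictionary data structure. A set~$X$ is enumerated during the processing of~$v$ if and only if~$X \subseteq N_\G^-(v)$, so after the loop terminates the counter at~$X$ is exactly $|\{v \in G : X \subseteq N_\G^-(v)\}|$, matching the definition of~$R[X]$.

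For the dictionary itself I would use a trie whose nodes store counters and whose edges are labelled by vertex identifiers in the canonical order~$<_\G$. Each inserted subset has size at most~$d$, so the cost per insertion is~$O(d)$, and with at most~$2^d$ insertions per vertex the total preprocessing time is~$O(d\, 2^d |G|)$. Computing a degeneracy order itself takes~$O(d|G|)$ by the result cited in the preliminaries, so it is absorbed by this bound. For queries, descending the trie along the sorted representation of~$X$ visits~$|X|$ nodes and therefore costs~$O(|X|)$, provided each child lookup is~$O(1)$.

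The main obstacle is achieving the $O(|X|)$ query bound without either a logarithmic overhead for sorting~$X$ or a logarithmic overhead at each trie node when locating the correct child. Both issues can be resolved by equipping each trie node with a hash table keyed by vertex identifiers (yielding the bound in the expected sense), or, more conveniently, by observing that in all applications of this lemma the query set~$X$ can be supplied sorted by~$<_\G$, in which case a direct pointer-chase through the trie suffices. Either route delivers the advertised $O(d\, 2^d |G|)$ construction and $O(|X|)$ query time.
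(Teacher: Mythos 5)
Your construction is correct and is the natural one for this data structure; the paper itself does not prove Lemma~\ref{lemma:R} but imports it from Drange~\etal~\cite{drangeComplexityDegenerate2023}, and your argument (enumerate the at most $2^d$ subsets of each left-neighbourhood, accumulate counts in a trie keyed by the degeneracy order, answer queries by a pointer-chase) matches the standard construction, including the correct handling of the $O(|X|)$ query-time caveat.
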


\begin{theorem}[\cite{drangeComplexityDegenerate2023}]\label{thm:data-structure}
  Let $\G$ be an ordered graph on~$n$ vertices with degeneracy~$d$. After a
  preprocessing time of~$O(d2^d n)$, we can, for any given~$S \subseteq V(G)$,
  compute a subset dictionary~$Q_S$ in time~$O(|S| 2^{|S|} + d|S|^2)$ which
  for any $X \subseteq S \subseteq V(G)$ answers the query\footnote{The data structure in the referenced paper counts all vertices $v \in V(G)$ with the given property but it is trivial to exclude vertices contained in~$X$.}
  $
    Q_S[X] \defeq \big| \{ v \in V(G)\setminus X \mid S \cap N(v) = X \} \big|
  $
  in time~$O(|X|)$.
\end{theorem}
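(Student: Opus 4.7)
The plan is to populate $Q_S[X]$ for all $X \subseteq S$ by inclusion--exclusion on the Boolean lattice of subsets of~$S$, driven by the dictionary $R$ from Lemma~\ref{lemma:R}, and then store the table in a trie so that a query for~$X$ is resolved in $O(|X|)$ descent time. A quick reformulation: if $v \in X$ and $S \cap N(v) = X$, then $v \in N(v)$, contradicting loop-freeness, so the side condition $v \in V(G)\setminus X$ is redundant and $Q_S[X] = |\{\,v \in V(G) : S \cap N(v) = X\,\}|$. Defining $B(Y) \defeq |\{\,v \in V(G) : Y \subseteq N(v)\,\}|$ for $Y \subseteq S$, Möbius inversion on the subset lattice gives $Q_S[X] = \sum_{X \subseteq Y \subseteq S} (-1)^{|Y|-|X|} B(Y)$, so the task reduces to tabulating all $B(Y)$.

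For $Y \neq \emptyset$ with $m \defeq \max_\G Y$ I split $B(Y) = R[Y] + B_1(Y)$, where $R[Y]$ counts the vertices $v >_\G m$ with $Y \subseteq N^-(v)$ and $B_1(Y)$ counts the vertices $v <_\G m$ with $Y \subseteq N(v)$. In the second case, $m \in N(v)$ together with $v <_\G m$ forces $v \in N^-(m)$, so only the at most $d$ vertices in $N^-(m)$ can contribute. I then pre-tabulate, for every pair $(m, v)$ with $m \in S$ and $v \in N^-(m)$ (at most $d|S|$ pairs), the signature $I(m, v) \defeq N(v) \cap S_{<m}$, using hash sets on $N^-(v)$ and on each $N^-(s)$ with $s \in S$; each signature is computed in $O(|S|)$ time, giving a total of $O(d|S|^2)$. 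Incrementing $c_m[I(m, v)]$ by one for each pair encodes $B_1(Y) = \sum_{Z \supseteq Y \setminus \{m\},\, Z \subseteq S_{<m}} c_m(Z)$.

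The main obstacle is that zeta-transforming each $c_m$ separately to recover all $B_1(Y)$ would cost $O(|S|^2 2^{|S|})$, blowing the budget. The workaround is to fuse this zeta transform with the outer Möbius transform defining $Q_S$: after swapping the order of summation, the inner signed sum over $Y_0$ with $X \setminus \{m\} \subseteq Y_0 \subseteq Z$ telescopes to zero unless $Z = X \setminus \{m\}$, and a short case analysis on whether $m \in X$ and on whether $m = \max_\G X$ collapses the $B_1$-contribution to $Q_S[X]$ (for $X \neq \emptyset$) to $c_{\max X}(X \setminus \{\max X\}) - \sum_{m \in S,\, m >_\G \max X} c_m(X)$, which is computable in $O(|S|)$ per $X$ and hence $O(|S| 2^{|S|})$ in total. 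The complementary $R$-contribution $\sum_{Y \supseteq X} (-1)^{|Y|-|X|} R[Y]$ is the standard subset-lattice Möbius transform, costing $O(|S| 2^{|S|})$ after reading the $R[Y]$ in the same time via Lemma~\ref{lemma:R}. Adding the two contributions gives the claimed $O(|S| 2^{|S|} + d|S|^2)$ preprocessing, and the completed table $Q_S$ is stored in a trie over the elements of $S$ ordered by $<_\G$, so that a query for $X$ is answered by an $|X|$-step descent.
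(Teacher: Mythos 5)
You should first be aware that the paper does not prove this statement at all: Theorem~\ref{thm:data-structure} is imported by citation from Drange et al.~\cite{drangeComplexityDegenerate2023} (the footnote only remarks that excluding $v \in X$ is trivial), so there is no in-paper proof to compare your argument against. Judged on its own, your reconstruction is sound. The observation that the side condition $v \in V(G)\setminus X$ is vacuous in a loop-free graph is correct; the M\"obius inversion $Q_S[X] = \sum_{X \subseteq Y \subseteq S}(-1)^{|Y|-|X|}B(Y)$ is correct; the split $B(Y) = R[Y] + B_1(Y)$ at $m = \max_\G Y$ matches the semantics of $R$ exactly (every $v >_\G m$ with $Y \subseteq N(v)$ has $Y \subseteq N^-(v)$ and vice versa, and every contributing $v <_\G m$ lies in $N^-(m)$, so at most $d$ candidates); and I verified the telescoping in your fused transform: the $B_1$-contribution to $Q_S[X]$ does collapse to $c_{\max X}(X\setminus\{\max X\}) - \sum_{m >_\G \max X} c_m(X)$ for $X \neq \emptyset$, with the boundary case $-\sum_{m \in S} c_m(\emptyset)$ for $X = \emptyset$.

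Two pieces of bookkeeping are looser than the stated budget, though both are easily repaired. First, building hash sets on $N^-(v)$ for every $v \in \bigcup_{m\in S}N^-(m)$ costs $O(d^2|S|)$, which is not dominated by $O(|S|2^{|S|} + d|S|^2)$ when $d$ far exceeds both $|S|$ and $2^{|S|}$; the clean fix is to build one global adjacency hash set over $E(G)$ during the $O(d2^d n)$ preprocessing (which dominates $O(dn)$), after which each signature $N(v)\cap S_{<m}$ costs $O(|S|)$ and the total is $O(d|S|^2)$ as claimed. Second, evaluating $\sum_{m >_\G \max X} c_m(X)$ separately for all $2^{|S|}$ sets $X$ via per-$m$ trie lookups costs $O(|S|\cdot|X|)$ per $X$, i.e.\ $O(|S|^2 2^{|S|})$ overall, exceeding the $O(|S|2^{|S|})$ you assert; either assume $O(1)$ bitmask-keyed lookups, or, better, scatter from the at most $d|S|$ nonzero entries of the $c_m$'s --- each pair $(m,v)$ with signature $Z = I(m,v)$ contributes $+1$ to the entry $X = Z\cup\{m\}$ and $-1$ to the entry $X = Z$ --- which costs $O(d|S|^2)$ in total. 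With these two repairs the claimed preprocessing and query bounds hold.
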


\noindent\marginnote{neighbourhood class}
In the following we will need the concept of a \emph{neighbourhood class}. For a bipartite graph with sides~$S$, $T$ we partition~$T$ into sets such that
$u,v \in T$ are in the same class iff $N(u) = N(v)$, that is, $u$ and~$v$ have exactly the same neighbours in~$S$. 

In the following we will make use of this simple observation about degenerate graphs and neighbourhood classes:
\begin{lemma}\label{lemma:neighbourhood-classes}
    Let~$H = (S,T,E)$ be a bipartite $d$-degenerate graph. Then the number
    of vertices in~$T$ of degree at least~$d$ is at most~$d|S|$ and the number
    of neighbourhood classes in~$T$ is at most~$4|S|^d$.
\end{lemma}
\begin{proof}
    Let~$\H$ be a $d$-degenerate ordering of~$H$ and partition~$T$ into the two sets~$T_r \defeq \{t \in T \mid t >_{\H} \max S\}$ and~$T_l \defeq T \setminus T_r$. Note that each vertex in~$T_l$ has at most~$d$ neighbours
    in~$S$, thus the number of vertices in~$T$ of degree at least~$d$
    is at most~$|T_l|$. Since~$T_l \subseteq N^-{\H}[S]$ and the latter has
    size at most~$d|S|$, the first statement holds.

    Note now that each vertex in~$T_r$ has at most $d$ neighbours in~$S$, therefore we have at most
    $\sum_{i = 1}^{d} {|S| \choose d} \leq \big( \frac{e |S|}{d} \big)^d \leq 3|S|^d$ neighbourhood classes in~$T_r$. Thus the total number of neighbourhood classes in~$T$ is bounded by
    $
        3|S|^d + |T_l| \leq 3|S|^d + d|S| \leq 4|S|^d
    $,
    where the last inequality holds when~$d \leq |S|^{d-1}$,
    which is true for all~$|S| \geq 2$ and~$d \geq 0$. Since for~$|S| = 1$ there is exactly one neighbourhood class, the right hand side is a bound for all sizes of~$S$ and the second claim follows. 
\end{proof}

\noindent
We note that the neighbourhood classes of a bipartite graph~$H$ can be computed in time~$O(|H| + \|H\|)$ using the partition-refinement data structure (\cite{habibPartitionRefinement1998}).

\section{Counting large patterns}

Let us review how bicliques of arbitrary size can be counted in linear time in degenerate graphs. There are three ideas to this algorithm: First, bicliques with two large sides cannot exist in degenerate graphs. Second, if a biclique has at least one small side, than this side can be located in the left-neighbourhood of a vertex and therefore can be found in linear time. Third, once we have found one side of a biclique we can use a suitable data structure to count how many joint neighbours this side has in the host graph and therefore count the number of bicliques `rooted' at this vertex set.

\begin{theorem}[adapted from~\cite{eppsteinBicliques1994}]
    There exists an algorithm that for any $s \leq t \in \N$ and
    any $d$-degenerate graph $G$ counts the number of (non-induced)
    $K_{s,t}$ in~$G$ in time $O(d^2 2^d |G|)$.
\end{theorem}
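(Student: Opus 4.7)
The plan is to count the number $N$ of ordered pairs $(A,B)$ of disjoint vertex sets with $|A|=s$, $|B|=t$ and $A \times B \subseteq E(G)$, then divide by~$2$ in the symmetric case $s=t$ to recover the biclique count. For any such pair, let $v = \max_\G(A \cup B)$; by degeneracy, the side opposite to~$v$ is contained in $N^-(v)$ and therefore has size at most~$d$. Since $s \leq t$, the smaller side always satisfies $s \leq d$, so I decompose $N = X + Y$ where $X$ counts the pairs with $v \in B$ (so the $s$-side lies in $N^-(v)$) and $Y$ counts those with $v \in A$; in particular $Y = 0$ whenever $t > d$.

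To compute~$X$, I factor the sum over the choice of the $s$-side:
\[
X \;=\; \sum_{|A|=s} \Bigl(\binom{|C(A)|}{t} - \binom{L_A}{t}\Bigr),
\]
where $C(A) = \bigcap_{a \in A} N(a)$ is the common neighbourhood of~$A$ and $L_A = |\{\,w <_\G \max_\G A : A \subseteq N(w)\,\}|$. The first binomial counts all $t$-subsets of $C(A)$ as candidate $B$-sides, while the second subtracts those entirely below $\max_\G A$, which correspond exactly to the embeddings assigned to~$Y$. Since $|C(A)| = L_A + R[A]$, the summand vanishes when $R[A] = 0$, so it suffices to enumerate $A$'s with $R[A] \geq 1$; these are precisely the size-$s$ subsets of some $N^-(v)$ and can be produced by scanning pairs $(v, A)$ with $A \subseteq N^-(v)$, $|A|=s$, and de-duplicating via a hash table.

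For each distinct~$A$, $R[A]$ is retrieved in $O(s)$ time via Lemma~\ref{lemma:R}, and $L_A$ is computed using the key observation that any contributing~$w$ must be a left-neighbour of $\max_\G A$ and hence lies in $N^-(\max_\G A)$, a set of size $\leq d$: iterating these candidates and testing $A \subseteq N(w)$ in $O(s)$ time each yields $L_A$ in $O(ds) = O(d^2)$ total. Combining the $O(d 2^d |G|)$ preprocessing from Lemma~\ref{lemma:R}, the $O(|G| 2^d)$ distinct sets~$A$, and $O(d^2)$ work per set yields the claimed $O(d^2 2^d |G|)$ bound. The quantity~$Y$ is computed by running the same procedure with the roles of $s$ and $t$ swapped (only needed, and only possible, when $t \leq d$); the final output is $X + Y$ if $s < t$ and $X$ if $s = t$, since in the latter case each unordered biclique has its maximum vertex on a uniquely determined side and is counted exactly once by~$X$. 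The main obstacle is computing $L_A$ efficiently; identifying it as an intersection confined to $N^-(\max_\G A)$ replaces a potentially global neighbourhood intersection with a local search over at most~$d$ candidates.
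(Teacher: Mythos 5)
Your proof is correct and follows essentially the same approach as the paper: bound the small side by $d$ via degeneracy, locate it as an $s$-subset of some left-neighbourhood, and count joint neighbours as $R[A]$ plus a local scan of $N^-(\max_\G A)$. Your accounting is in fact more explicit than the paper's --- the $X+Y$ decomposition with the $\binom{L_A}{t}$ correction term, the de-duplication of candidate sets~$A$, and the $s=t$ symmetry handling make precise the points the paper's proof glosses over.
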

\begin{proof}
    We begin with the simple case of $s \geq d+1$, in which case $K_{s,t}$
    is not $d$-degenerate and therefore cannot occur in $G$. In all remaining cases, our algorithm computes a $d$-degeneracy ordering $\G$
    of~$G$ in time~$O(d |G|)$ and preprocesses the graph in time $O(d2^d |G|)$ to compute the data structure~$R$ described in Lemma~\ref{lemma:R}.

    Next, consider the regime where $s \leq d$ and $t \geq d+1$. Then for any occurrence of $K_{s,t}$ as $\K \subgraph \G$ it holds that the last
    vertex $x = \max \K$ must be part of the $t$-side of the $K_{s,t}$ since otherwise we would find a vertex with left-degree $t > d$. We can therefore locate the $s$-side of the $K_{s,t}$ by guessing $x$
    and then guessing an $s$-sized subset $S \subseteq N^-[x]$ in total time $O(n {d \choose s}) = O(2^d n)$. Let~$c_L$ denote the number of joint 
    neighbours of~$S$ who appear not to the right of~$\max_\G S$, \ie
    $
        c_l \defeq |\{ v \in N_G(S) \mid v < \max_\G S \}|
    $.
    Note that we can compute $c_L$ by simply inspecting each of the $\leq d$ vertices in $N^-(\max_\G S)$ in time~$O(ds) = O(d^2)$. Let
    $c_r$ denote the number of joint neighbours of~$S$ who appear to the
    right of $\max_\G S$ and note that $c_r = R[S]$, meaning we can retrieve it in time $O(s) = O(d)$. Given $c_l$ and $c_r$, the number of $K_{s,t}$ in~$G$ whose $s$-side is exactly $S$ is then given by
    ${c_l + c_r \choose t}$. 

    Finally, consider the regime where $s \leq t \leq d$. Let $\K \subgraph \G$ be any occurrence of $K_{s,t}$ with sides $S$, $T$. Let $x \defeq \max \K$, then either $S \subseteq N^-(x)$ or $T \subseteq N^-(t)$. In either case, we can proceed exactly as in the previous case to count the occurrence of \emph{all} $K_{s,t}$ with this specific $S$- or $T$-set in $G$.
\end{proof}

\noindent
Based on these core ideas we now develop our algorithm to count \emph{large} graphs within degenerate host graph. These graphs will have a specific structure, which we formalise here:

\begin{definition}[Pattern]\marginnote{Pattern, $S$, $T$}
    A \emph{pattern} is a connected graph $H$ whose vertex set is partitioned
    into sets $S,T$, called its \emph{sides}, where the set $T$ is independent in $H$, $T \subseteq N(S)$ and $|S| < |T|$.
    We call $S$ the \emph{small} and $T$ the \emph{large} side of the pattern.
\end{definition}

\noindent
We will look at two variants of how patterns might appear in a host graph. The first corresponds to a `semi-induced' subgraph, meaning that all edges and non-edges inside~$S$ and between~$S$ and~$T$ must appear exactly as in the pattern. However, we allow for arbitrary edges to appear between $T$-vertices.

\begin{definition}[Strong pattern containment, -- frequency]%
    \marginnote{$H \embeds{\phi} G$, $\Count(H \embeds{} G)$}%
    We say that a pattern $H$ with sides $S$, $T$ is \emph{strongly contained} in another graph $G$ if there exist disjoint
    vertex sets $\tilde S,\tilde T \subseteq V(G)$ and a bijection $\phi\colon S 
    \cup T \to \tilde S \cup \tilde T$ with the following properties:
    \vspace*{-6pt}
    \begin{itemize}
        \item $\phi(S) = \tilde S$ and $\phi(T) = \tilde T$, and
        \item for every pair $u \in S$, $v \in S \cup T$ it holds 
        that $uv \in H \iff \phi(u)\phi(v) \in G$.
    \end{itemize}
    \vspace*{-12pt}
    We call $\phi$ a \emph{strong embedding} of $H$
    into~$G$ and use the shorthand~$H \embeds{\phi} G$ to denote this fact. We further define $\Count(H \embeds{} G)$ as the number of strong embeddings that exist of~$H$ into~$G$.
\end{definition}

\noindent
The second notion of pattern containment corresponds to subgraphs,  for a unified presentation we use a different terminology and notation:

\begin{definition}[Weak pattern containment, -- frequency]%
    \marginnote{$H \embedsweak{\phi} G$, $\Count(H \embedsweak{} G)$}%
    We say that a pattern $H$ with sides $S$, $T$ is \emph{weakly contained} in another graph $G$ if there exist disjoint
    vertex sets $\tilde S,\tilde T \subseteq V(G)$, an edge
    set $\tilde F \subseteq E(G[S\cup T])$, and a bijection $\psi\colon V(H) \cup E(H) \to \tilde S \cup \tilde T \cup \tilde F$ with the following properties: 
    \vspace*{-6pt}
    \begin{itemize}
        \item $\phi(S) = \tilde S$ and $\phi(T) = \tilde T$, and
        \item for every edge $uv \in H$ it holds that
        $\phi(u)\phi(v) = \phi(uv)$ and
        $\phi(uv) \in \tilde F$.
    \end{itemize}
    \vspace*{-12pt}
    We call $\psi$ a \emph{weak embedding} of $H$
    into~$G$ and use the shorthand~$H \embedsweak{\phi} G$ to denote this fact. We further define $\Count(H \embedsweak{} G)$ as the number of weak embeddings that exist of~$H$ into~$G$.
\end{definition}

\noindent
Since weak embeddings not only map vertices but also edges, we will use the notation $\psi(H)$ to denote the subgraph of~$G$ onto which~$H$ is mapped by~$\psi$.

\marginnote{\footnotesize$\Count\!(\any \!\embeds{\!\!\phi\!\!}\! \any \!\!\mid\!\! P(\phi))$}
In the following we will often count embeddings with certain additional properties. In these cases, we will use the notation 
$
    \Count(H \smash{\embeds{\phi}} G \mid P(\phi)) \defeq |\{ H \smash{\embeds{\phi}} G \mid P(\phi) \}|
$,
where~$P$ is some predicate about~$\phi$, to mean the number of strong embeddings~$\phi$ for which~$P$ holds. We use an analogous notation for weak embeddings.

\subsection{Computing and encoding pattern automorphisms}\label{sec:automorphisms}

In contrast to the nicely symmetric bicliques, arbitrary patterns 
have much more complicated automorphisms that we have to compute explicitly. Since~$S$ is a vertex cover of the pattern, the automorphisms of~$H$ are 
pretty much fixed by the orbit of~$S$ under the automorphism action. In particular, they are very tame if the orbit of~$S$ only contains~$S$ itself. However, this is not true for all patterns and
thus we have to explicitly compute the orbit in order to proceed.
\marginnote{$\orb_S$}
The $S$-orbit $\orb_S(H)$ contains all sets~$X \subseteq V(G)$ onto which~$S$ is mapped by some automorphism of~$H$, that is
$
    \orb_S(H) := \{ X \subseteq V(H) \mid \exists\text{ automorphism}~\lambda~\text{of~$H$ with}~\lambda(S) = X\}
$.

\begin{lemma}\label{lemma:S-orbit}
    For every pattern~$H$ the $S$-orbit $\orb_S(H)$ can be enumerated
    in time~$O(|S|^{|S|} \cdot |H|^4 )$. If~$H$
    is $d$-degenerate, the running time improves to $O(d^2 |S|^{|S|+d} \cdot |H|^2)$.
\end{lemma}
\begin{proof}
	Let us partition the set~$T$ according to neighbourhoods in~$S$: let~$\{T_X\}_{X \subseteq S}$ be the neighbourhood classes with~$T_X \defeq \{ u \in T \mid N_H(u) = X \}$.
	
	We make use of the fact that~$S$ is a vertex cover of~$H$, and therefore
	any automorphism of~$H$ must map~$S$ onto another vertex cover of size~$|S|$. We enumerate all such vertex covers in time~$O(2^{|S|} \cdot |H|)$ using the classical branching algorithm. For each such vertex cover~$C$, we enumerate all $|S|!$ mappings~$\sigma\colon S \to C$.
	Note that checking whether a given~$\sigma$ can be extended to an automorphism is simple: first compute the neighbourhood classes of~$\{ U_X \}_{X \subseteq C}$ of~$V(H)\setminus C$ in~$H$ with respect to~$C$ in time~$O(|H| + \|H\|)$.
	We now check whether~$\sigma(S)$ has the same neighbourhood structure as~$S$, meaning that~$|T_X| = |U_{\sigma(X)}|$ for all non-empty neighbourhood classes~$T_X$. Since the number of classes is bounded by~$|T|$, this can be done in time~$O(\|H\| \cdot |S|)$ for general patterns using a suitable set data structure to store~$\sigma$. We arrive at a running time of
	$
	O(2^{|S|} |H| (|H|+\|H\|) |S|! \cdot \|H\|^2 |S| )
	= O(|S|^{|S|} |H|^4 ),
	$
	where we used that~$\|H\| = O(|S| \cdot |H|)$ and that~$k^4 2^k k! = O(k^k)$.
	
	If~$H$ is~$d$-degenerate, we can apply Lemma~\ref{lemma:neighbourhood-classes} to bound the number of non-empty neighbourhood classes of~$T$ by $4|S|^d$.
	We can therefore replace the factors~$O(\|H\|)$ in the above argument with~$O(|S|^d)$ after an initial preprocessing of time~$O(|H| + \|H\|) = O(d|H|)$ to compute the classes.
\end{proof}

\noindent%
\marginpar{$\aut_S$}%
The second quantity we will need are automorphisms that stabilise~$S$, which we will denote by $\aut_S(H)$. That is, $\aut_S(H)$ contains all automorphisms~$\lambda \in \aut(H)$ with~$\lambda(S) = S$.

\begin{lemma}\label{lemma:count-S-autos}%
    For every pattern~$H$ the number of automorphisms $\aut_S(H)$ can be computed in time~$O(|S|^{|S|} \cdot |H|)$.
    If~$H$ is $d$-degenerate, the running time improves to~$O(\|H\| + |S|^{|S| + d})$.
\end{lemma}
\begin{proof}
	Let again~$\{T_X\}_{X \subseteq S}$ be the neighbourhood classes of~$T$
	which we can construct in time~$O(\|H\|)$. For every
	mapping~$\sigma\colon S \to S$ which preserves the neighbourhood
	structure, \ie $|T_X| = |T_{\sigma(X)}|$ for all non-empty classes~$T_X$,
	we count $\prod_{X \subseteq S} |T_X|!$ automorphisms since every
	class~$T_X$ can only be mapped onto itself. As there are at most~$|T|$
	non-empty classes and querying~$|T_X|$ takes times~$O(|S|)$ with a
	suitable dictionary data structure, we arrive at a running time of~$O
	(\|H\| + |S|! \cdot |T| \cdot |S|) = O(|S|^{|S|} \cdot |H|)$.
	
	If~$H$ is $d$-degenerate, we again can bound the number of neighbourhood classes by~$O(|S|^d)$ using Lemma~\ref{lemma:neighbourhood-classes} and thus get a running time of
	$O(\|H\| + |S|! |S|^{d+1}) = O(\|H\| + |S|^{|S| + d})$.
\end{proof}

\begin{corollary}\label{cor:count-autos}
    For every pattern~$H$ the number of automorphisms $\aut(H)$ can be computed in time~$O(|S|^{|S|} \cdot |H|^4)$.
    If~$H$ is $d$-degenerate, the running time improves to~$O(d^2 |S|^{|S|+d} \cdot |H|^2)$.
\end{corollary}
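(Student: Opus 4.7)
The plan is to apply the orbit--stabilizer theorem to the action of $\aut(H)$ on the subsets of $V(H)$, taking the distinguished set $S$ as the reference element. Under this action, the orbit of $S$ is exactly $\orb_S(H)$ and its stabilizer is exactly $\aut_S(H)$, so the orbit--stabilizer identity gives
\[
  |\aut(H)| = |\orb_S(H)| \cdot |\aut_S(H)|.
\]
The proof therefore reduces to computing both factors and multiplying them.

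First I would invoke Lemma~\ref{lemma:S-orbit} to enumerate~$\orb_S(H)$ in time $O(|S|^{|S|} \cdot |H|^4)$ in the general case, respectively $O(d^2 |S|^{|S|+d} \cdot |H|^2)$ when~$H$ is $d$-degenerate. In particular this gives us $|\orb_S(H)|$ as a by-product by simply counting the enumerated sets. Next I would apply Lemma~\ref{lemma:count-S-autos} to compute $|\aut_S(H)|$ in time $O(|S|^{|S|} \cdot |H|)$, respectively $O(\|H\| + |S|^{|S|+d})$ in the $d$-degenerate case. A single integer multiplication finishes the computation.

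The total running time is dominated by the call to Lemma~\ref{lemma:S-orbit}, since the cost of the stabilizer computation from Lemma~\ref{lemma:count-S-autos} is always smaller (note that $\|H\| \leq d|H|$ in the $d$-degenerate regime, which is absorbed). This yields the claimed bounds of $O(|S|^{|S|} \cdot |H|^4)$ and $O(d^2 |S|^{|S|+d} \cdot |H|^2)$, respectively.

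There is no real obstacle, since both the orbit and the stabilizer have already been constructed in the preceding lemmas; the only content of the corollary is the algebraic identity from the orbit--stabilizer theorem and the observation that the running time is bounded by the more expensive of the two subroutines. The one point worth checking is that the action is well-defined, \ie that every $\lambda \in \aut(H)$ maps the set $S$ to a set in $\orb_S(H)$ and that stabilizing $S$ as a set (not pointwise) is exactly what $\aut_S(H)$ captures; both follow immediately from the definitions given just before Lemmas~\ref{lemma:S-orbit} and~\ref{lemma:count-S-autos}.
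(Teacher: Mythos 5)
Your proposal is correct and matches the paper's own proof: both apply the orbit--stabilizer theorem for the action of $\aut(H)$ on subsets of $V(H)$ to obtain $|\aut(H)| = |\orb_S(H)| \cdot |\aut_S(H)|$, and then combine Lemma~\ref{lemma:S-orbit} and Lemma~\ref{lemma:count-S-autos}, with the running time dominated by the orbit enumeration.
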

\begin{proof}
    If we view $\aut(H)$ as acting on~$2^{V(G)}$ instead of~$V(G)$, then by the Orbit--Stabilizer Theorem:~$\aut(H) = |\orb_S(H)| \cdot \aut_S(H)$. 
    Therefore we simply combine Lemma~\ref{lemma:S-orbit} and Lemma~\ref{lemma:count-S-autos}.
\end{proof}

\noindent
We will need to keep track of embeddings, automorphisms,
and orderings which poses a challenge for the presentation of our work and the patience of the reader. The following notion of an \emph{index representation}
encodes a specific automorphism of~$H$ under a specific ordering in which it might appear in a host graph. 


\begin{definition}[Index representation]%
    \marginnote{Index representation, $(\H_S, J)$, $\rep$}%
   Let~$H$ be a pattern with sides $S$, $T$ and let~$\S$ be an ordering of~$S$. Let further $s \defeq |S|$. The \emph{index representation of~$H$ under $\S$} is a tuple~$\rep(H, \S) = (\H_S, \jointex)$
    where $\H_S$ is an ordered graph with vertices~$[s]$ and
    $\jointex\colon 2^{[s]} \to \N$ is a function such that
    \vspace*{-6pt}
    \begin{itemize}
        \item $\H_S$ is the ordered graph obtain from~$H[S]$ by         relabelling vertices according to their index in $\S$ and applying the natural order on~$[s]$, and 
        \item for all $I \subseteq [s]$ it holds that
        $
            \jointex(I) = \big| \{ v \in T \mid N_H(v) = \S[I] \} \big|
        $.
    \end{itemize}
    \vspace*{-6pt}
    We write $\rep(H) \defeq \{ \rep(H,\S) \mid \S \in \pi(S) \}$ for the set of all index representations of~$H$.
\end{definition}

\noindent 
Put more simply, an index representation provides a uniform encoding of how the $S$-part of a pattern is ordered and accordingly how the neighbourhoods of $T$-vertices will be labelled. We will frequently need the index representation for the image of a given embedding of~$H$ into~$G$, which is why we introduce the following shorthand:

\begin{definition}[Index representation of embedding]%
    \marginnote{$\rep(\any, \G)$}%
    Let~$\lambda$ be a either a strong or weak embedding of~$H$ into~$G$ and let~$\G$ be an ordering of~$G$. Let further~$\tilde H$ be the pattern in~$G$ defined by the image of~$\lambda$, meaning $\tilde H = \G[\lambda(V(H))]\setminus E(G[\lambda(T)]$ in the strong case and $\tilde H = \lambda(H)$ in the weak case.
    We define the notation
    $
        \rep(\lambda, \G) \defeq \rep(\tilde H, \tilde \S)
    $
    where $\tilde \S$ is the set~$\lambda(S)$ ordered by~$<_\G$.
\end{definition}

\noindent
Index representation are semi-canonical representations of patterns because we can use them to describe embeddings: if $\phi$ is a 
strong embedding of $H$ into~$G$ and $\G$ is an ordering of~$G$, then
there exists an ordering~$\S$ of~$S$ such that $\rep(H,\S) = \rep(\phi, \G)$. More specifically, we can use representations to distinguish certain embeddings:

\begin{lemma}\label{lemma:rep-diff}%
    Let~$\phi_1, \phi_2$ be embeddings (weak or strong) of~$H$ into~$G$ with~$\phi_1(S) = \phi_2(S)$ and let~$\G$ be an ordering of~$G$.
    If $\rep(\phi_1, \G) \neq \rep(\phi_2, \G)$ 
    then $\phi_1 \neq \phi_2$.
\end{lemma}
\begin{proof}
	Let $(\H^1_S, \jointex^1) = \rep(\phi_1, \G)$ and let
	$(\H^2_S, \jointex^2) = \rep(\phi_2, \G)$ be the two representations. If~$\H^1_S \neq \H^2_S$ then clearly~$\phi_1 \neq \phi_2$, so assume $\H^1_S = \H^2_S$. If the representations differ, there must exist an index set $I \subseteq [s]$ such that $\jointex^1[I] \neq \jointex^2[I]$. But then, by construction of the representations, the set $\tilde \S[I] \subseteq \tilde S$ has $\jointex^1[I]$ joint neighbours in~$T_1$ and $\jointex^2[I]$ joint neighbours in~$T_2$ and we conclude that $\phi_1 \neq \phi_2$.
\end{proof}

\subsection{Counting fixed occurrences}\label{sec:counting-fixed}

We now describe the parts of our algorithm which count the number of strong/weak embeddings whose $S$-side coincides with some fixed set $\tilde S$ in the host graph~$G$. More concretely, these subroutines (Algorithm~\ref{alg:count-strong} and Algorithm~\ref{alg:count-weak}) are given a specific index representation and only count embeddings that conform to this embedding. We deal with finding the sets~$\tilde S$ in the second part of the algorithm.

\begin{algorithm}
    \DontPrintSemicolon
    \SetNoFillComment
    \SetKwProg{Fn}{Function}{}{}

    \KwInput{An ordered graph $\G$, an index representations $(\H_S, \jointex)$ of a pattern~$H$, a subset $\tilde S \subseteq V(G)$ of size~$|S|$, and the data structure~$Q_{\tilde S}$ from Theorem~\ref{thm:data-structure}.}
    \KwOutput{$\Count(H \embeds{\phi} G \mid \phi(S) = \tilde S ~\text{and}~\rep(\phi, \G) = (\H_S, \jointex))$.}
    
	\vspace*{-6pt}
    \varrule

    \Fn{CountStrong($\G$, $(\H_S, \jointex)$, $\tilde S, Q_{\tilde S}$, $\aut_S(H)$)}{
        $t = 0$\;
        Let $\tilde \S$ be $S$ ordered by $<_\G$\;
        \comment{1}{Check $S$-side of pattern}\;
        \If{$\H_S \neq \G[\tilde S]$}{
            \Return 0\; 
        }
        \comment{2}{Count number of choices for $T$-side of pattern}\;
        $k = 1$\;
        \For{$I \subseteq [s]$ with $\jointex(I) \neq 0$}{
            $k = k \cdot {Q_{\tilde S}[\tilde \S[I]] \choose \jointex(I) }$\;
        }
        \Return $\aut_S(H) \cdot k$ \tcp*{Normalised for ease of presentation}
    }
    \caption{\label{alg:count-strong}%
    Subroutine to count strong embeddings for a specific $S$-set.}
\end{algorithm}

\begin{replemma}{LemmaCountSStrong}%
    \label{lemma:count-S-strong}%
    Let~$\G$ be an ordering of a graph~$G$ and let~$(\H_S, \jointex)$ be an index representations of a pattern~$H$. Let further~$\tilde S \subseteq V(G)$ and~$\tilde \S$ be the ordering of~$S$ under $<_\G$. Let~$Q_{\tilde S}$ be the data structure described in Theorem~\ref{thm:data-structure}.
    Given $\G$, $(\H_S, \jointex)$, $\tilde S$, and $Q_{\tilde S}$, the subroutine \emph{CountStrong} (Algorithm~\ref{alg:count-strong}) 
    returns
    \[
        \Count(H \smash{\embeds{\phi}} G \mid \phi(S) = \tilde S ~\text{and}~\rep(\phi, \G) = (\H_S, \jointex)),
    \]
    that is, the number of strong embeddings~$H \smash{\embeds{\phi}} G$ with $\phi(S) = \tilde S$ and whose representative under~$\G$
    is exactly~$(\H_S, \jointex)$.
    The running time of \emph{CountStrong} is~$O(|\supp(\jointex)| \cdot |S|)$
\end{replemma}
\begin{proof}
    Define the set $\Phi \defeq \{ H \embeds{\phi} G \mid \phi(S) = \tilde S ~\text{and}~ \rep(\phi, \G) = (\H_S, \jointex) \}$ to contain all embeddings that
    map $S$ onto~$\tilde S$ and have the given index representation.
    Let further~$\mathcal H \defeq \{ V(\phi(S \cup T)) \mid \phi \in \Phi \}$ contain all vertex subsets of~$V(G)$ that are images of~$\Phi$.

    First, we claim that $|\Phi| = \aut_S(H) \cdot |\mathcal H|$.
    This is the usual relationship between embeddings and substructures:
    Simply note that if we take an automorphism~$\lambda$ of~$H$ with
    $\lambda(S) = S$ and a strong embedding~$\phi \in \Phi$ with~$X = \phi(S \cup T)$,
    then~$\phi \circ \lambda$ is a different strong embedding which maps onto~$X$. In the other direction, we can take two distinct
    embeddings~$\phi,\phi' \in \Phi$ and the function~$\lambda$
    which satisfies~$\phi \circ \lambda = \phi'$ will be an automorphism of~$H$ with~$\lambda(S) = S$. 

    We claim that at the end of calling \emph{CountStrong}  with the above arguments, the variable~$k$
    is equal to~$|\mathcal H|$ and therefore \emph{CountStrong} returns
    the value $|\Phi|$.
    Note that this is equivalent to claiming that
    $
        |\mathcal H| = \prod_{\substack{I \subseteq [s]}} { Q_{\tilde S}[\tilde \S[I]] \choose \jointex(I)}
    $,
    as the right-hand side is precisely what the variable~$k$ contains at the end of the function.
    Recall that the data structure~$Q_{\tilde S}[X]$ counts the number of joint vertices in~$G$ whose neighbourhood in~$\tilde S$ is precisely~$X$. For~$X \subseteq \tilde S$, let us write
    $V_X \defeq \{ v \in V(G)\setminus S \mid N_G(v) \cap S = X \}$. With this notation, we see that~$Q_{\tilde S}[X]$ is exactly~$|V_X|$and therefore the right-hand side is equal to
    \[
        \prod_{\substack{I \subseteq [s]}} { Q_{\tilde S}[\tilde \S[I]] \choose \jointex(I)} 
        = \prod_{\substack{X \subseteq \tilde S}} { Q_{\tilde S}[X] \choose \jointex(\ix_{\tilde \S}(X))}
        =  \prod_{\substack{X \subseteq \tilde S}} { |V_X| \choose \jointex(\ix_{\tilde \S}(X))}.
    \]
    To see that this right-hand side is exactly~$|\mathcal H|$, consider a vertex set~$\tilde H \subseteq V(G)$ that is the image of at least one~$\phi \in \Phi$. Since~$\phi(S) = \tilde S$, we know that~$\tilde S \subseteq \tilde H$ and hence~$\phi(T) = \tilde H \setminus \tilde S$. Moreover, since~$\rep(\phi, \G) = (\H_S, \jointex)$, for each~$X \subseteq \tilde S$ we have that~$\phi(T)$
    contains exactly~$J(\ix_{\tilde \S}(X))$ vertices of~$V_X$. Accordingly, the total number of sets in~$|\mathcal H|$ is exactly the product of choices for taking $J(\ix_{\tilde \S}(X))$ out of~$V_X$, which is exactly the right-hand side.

    The running time is straightforward assuming that the function~$\jointex$
    is stored using \eg a prefix trie and therefore allows us to iterate over its support in time~$O(|\supp(\jointex)|)$. 
\end{proof}

\begin{algorithm}[!htb]
    \DontPrintSemicolon
    \SetNoFillComment
    \SetKwProg{Fn}{Function}{}{}

    \KwInput{An ordered graph $\G$, an index representations $(\H_S, \jointex)$ of a pattern~$H$, a subset $\tilde S \subseteq V(G)$ of size~$|S|$, and the data structure~$Q_{\tilde S}$ from Theorem~\ref{thm:data-structure}.}
    \KwOutput{$\Count(H \embedsweak{\psi} G \mid \psi(S) = \tilde S ~\text{and}~\rep(\psi, \G) = (\H_S, \jointex))$}

    \vspace*{-6pt}
    \varrule

    \Fn{CountWeak${}_d$($\G$, $(\H_S, \jointex)$, $\tilde S$, $Q_{\tilde S}$, $\aut_S(H)$)}{
        Let $\tilde \S$ be $S$ ordered by $<_\G$\;
        \If{$\H_S \not \subseteq \G[\tilde S]$}{
            \Return 0\; 
        }        
        \comment{1}{Construct flow graph (s,t are implicit, see proof of Lemma~\ref{lemma:count-S-weak})}\;
        Let $A = (L, R, F)$ with $L, R, F = \emptyset$\;
        Initialise empty dictionaries $\supply, \demand$\;
        \For{$I \subseteq [s]$ with $Q_{\tilde S}[\tilde \S[I]] > 0$}{
            $L = L \cup \{ l_I \}$\;
            $\supply[l_I] = Q_{\tilde S}[\tilde \S[I]]$\;
        }
        \For{$I \subseteq [s]$ with $\jointex(I) > 0$}{
            $R = R \cup \{ r_I \}$\;
            $\demand[r_I] = \jointex(I)$\;
        }      
        \For(\tcp*[f]{See Lemma~\ref{lemma:count-S-weak} for details }){$l_I, r_{I'} \in L \times R$ with $I' \subseteq I$ }{ 
            $F = F \cup (l_I, r_{I'})$
        }                    
        \comment{2}{Enumerate embeddings}\;
        $p = \max_{r_I \in R}\demand[r_I]$, \,
        $M = \sum_{r_I \in R}\demand[r_I]$, \,
        $t = 0$\; 
        \For{$f \in [p]^F$}{
            \tcp{We treat edges in~$F$ as having infinite capacity}
            \If{$f$ is not a flow on $A$ or has value $< M$ }{
                \Continue
            }
            \comment{3}{Compute number of embeddings represented by $f$}\;                
            $k = 1$\;
            \For{$l_I \in L$}{
                $k = k \cdot \frac{\supply[l]!}{(\supply[l] - f(sl))!}$
            }
            $t = t + k$\;
        }
        \Return $\aut_S(H) \cdot t$ \tcp*{Normalised for ease of presentation}
    }
    \caption{\label{alg:count-weak}%
    Subroutine to count weak embeddings for a specific $S$-set.}
\end{algorithm}

\noindent
The proof for counting weak patterns is similar, but more involved since we have more choices of which edges to include.

\begin{lemma}\label{lemma:count-S-weak}%
    Let~$\G$ be an ordering of a graph~$G$ and let~$(\H_S, \jointex)$ be an index representations of a pattern~$H$. Let further~$\tilde S \subseteq V(G)$ and~$\tilde \S$ the ordering of~$S$ under $<_\G$. Let~$Q_{\tilde S}$ be the data structure described in Theorem~\ref{thm:data-structure}.
    Given $\G$, $(\H_S, \jointex)$, $\tilde S$, and $Q_{\tilde S}$, the subroutine \emph{CountWeak${}_d$} (Algorithm~\ref{alg:count-weak}) 
    returns 
    \[
        \Count(H \smash{\embedsweak{\psi}} G \mid \psi(S) = \tilde S ~\text{and}~\rep(\psi, \G) = (\H_S, \jointex)),
    \]
    that is, the number of weak embeddings~$H \smash{\embedsweak{\psi}} G$ with $\psi(S) = \tilde S$ and whose representative under~$\G$
    is exactly~$(\H_S, \jointex)$.
    The running time of \emph{CountWeak${}_d$} is~$O(d2^d |S|^{d+2} \cdot |H|^{2^d |S|^{d+1}}) $    
.
\end{lemma}
\begin{proof}
    Define the set $\Psi \defeq \{ H \embedsweak{\psi} G \mid \psi(S) = \tilde S ~\text{and}~ \rep(\psi, \G) = (\H_S, \jointex) \}$ to contain all weak embeddings that
    map $S$ onto~$\tilde S$ and have the given index representation.
    Let further~$\mathcal H \defeq \{ \psi(H) \mid \psi \in \Psi \}$ contain all subgraphs of~$G$ that are images of~$\Psi$.
    As in the case for strong embeddings, it is easy to see that
    $|\Psi| = \aut_S(H) \cdot |\mathcal H|$.

    We claim that at the end of calling \emph{CountWeak${}_d$}  with the above arguments, the variable~$t$ is equal to~$|\mathcal H|$ and thus the 
    function returns~$|\Psi|$.

    Let us again use the notation~$V_X \defeq \{ v \in V(G)\setminus \tilde S \mid N(v) \cap \tilde S = X \}$ for~$X \subseteq \tilde S$
    and introduce the notation~$V_{X \subseteq{}} \defeq \{ v \in V(G)\setminus \tilde S \mid N(v) \cap \tilde S \supseteq X \}$.
    In order to construct a graph~$\tilde H \in \mathcal H$, which in particular must have the index representation~$(\H_S, \jointex)$, we need to add for every set~$X \subseteq \tilde S$ a total
    of $\jointex(\ix_{\tilde \S}(X))$ vertices from~$V_{X\subseteq{}}$ to~$\tilde H$. To enumerate all possible ways in which we can use distribute vertices from~$\{V_{X\subseteq{}}\}_{X \subseteq \tilde S}$ onto $T$-vertices, \emph{CountWeak${}_d$} proceeds as follows
    (based on an algorithm by \cite{curticapeanMarxCount2014}).

    First, we construct a bipartite flow graph~$A = (s,t,L,R,F)$ with source $s$ and sink~$t$. Edges will only go from~$s$
    to~$L$, from~$R$ to~$L$, and from~$L$ to~$t$.  The set $L$ contains a vertex~$l_I$ for each index set~$I \subseteq [s]$ where~$Q_{\tilde S}[\tilde \S[I]] = |V_{\S[I]}|$ is larger than zero and the set~$R$ contains vertices~$r_I$ for all index sets $I \subseteq [s]$ where~$\jointex(I) > 0$.  The capacities of
    edges from~$s$ to~$L$ are stored in a dictionary~$\supply$, the capacities of edges from~$L$ to~$R$ are all infinite, and the capacities of edges from~$R$ to~$t$ are stored in a dictionary~$\demand$. All edges are then added to $F$. 

    Next the algorithm adds each edge~$l_{I} r_{I'}$ to~$F$ where $I' \subseteq I$ with infinite capacity. This models that the set~$\S[I]$ can be used to construct vertices with a neighbourhood~$\S[I'] \subseteq \S[I]$.
    For reasons of space, the listed code simply enumerates all these edges, but the procedure is a bit more involved to arrive at a better running time:
    \begin{claim}
        The number of all pairs~$l_{I}, r_{I'}$ with $I' \subseteq I$
        is at most~$O(2^d |S|^{d+1})$
        and we can enumerate them in time~$O(d2^d |S|^{d+2})$.
    \end{claim}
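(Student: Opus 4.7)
The plan is to apply Lemma~\ref{lemma:neighbourhood-classes} separately to $L$ and to $R$, and then split $L$ according to whether $|I| \leq d$ in order to use a different enumeration strategy in each regime. The lemma applies to $R$ via the bipartite subgraph of~$H$ on $(S,T)$, which is $d$-degenerate since $(\H_S, \jointex)$ was produced from a $d$-degenerate ordering of~$H$; it applies to $L$ via the bipartite subgraph of~$G$ between~$\tilde S$ and~$V(G) \setminus \tilde S$, which is $d$-degenerate as a subgraph of~$G$. In both cases we obtain at most~$4|S|^d$ non-empty classes in total, and at most~$d|S|$ classes whose index set has size at least~$d$.

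I would then partition $L$ into $L_{\leq d} \defeq \{l_I \in L \mid |I| \leq d\}$ and $L_{>d} \defeq L \setminus L_{\leq d}$, noting that $|L_{>d}| \leq d|S|$ and $|L_{\leq d}| \leq 4|S|^d$. For each $l_I \in L_{\leq d}$ the algorithm iterates over the $2^{|I|} \leq 2^d$ subsets $I' \subseteq I$ and checks membership $r_{I'} \in R$ via a dictionary on~$R$ in $O(|I'|) = O(d)$ time per query. For each $l_I \in L_{>d}$ it instead iterates over every $r_{I'} \in R$ and checks the inclusion $I' \subseteq I$ in $O(|I'|) = O(|S|)$ time using a set representation of~$I$. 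Every generated pair satisfies $I' \subseteq I$ by construction, and no pair is emitted twice because each $l_I$ is processed once.

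The total number of emitted pairs is then bounded by
\[
|L_{\leq d}| \cdot 2^d \;+\; |L_{>d}| \cdot |R| \;\leq\; 4|S|^d \cdot 2^d + d|S| \cdot 4|S|^d \;=\; O(2^d |S|^{d+1}),
\]
using $d \leq 2^d$, and the enumeration takes
\[
O(|L_{\leq d}| \cdot d \cdot 2^d) \;+\; O(|L_{>d}| \cdot |R| \cdot |S|) \;=\; O(d \, 2^d |S|^d) + O(d |S|^{d+2}) \;=\; O(d \, 2^d |S|^{d+2}).
\]
The main point of the argument, and the reason a naive approach fails, is that neither side's bound is individually strong enough: enumerating all subsets of every~$I$ costs up to $2^{|S|}$ per class when $|I|$ is large, while iterating over $R$ for every $l_I$ gives $|L|\cdot|R| = \Theta(|S|^{2d})$ pairs. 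The combinatorial scarcity of large-degree classes on \emph{both} sides---provided by Lemma~\ref{lemma:neighbourhood-classes}---is precisely what lets us select whichever strategy is cheaper for each $l_I$ and collapse one $|S|^d$ factor into either $2^d$ (for small $I$) or $d|S|$ (for the few large~$I$).
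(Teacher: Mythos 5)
Your proof is correct and follows essentially the same strategy as the paper's: apply Lemma~\ref{lemma:neighbourhood-classes} to both sides to get the $d|S|$ bound on large-index classes and the $O(|S|^d)$ bound on all classes, then split one side by $|I| \leq d$ versus $|I| > d$ and use subset enumeration for the former and brute-force testing against the other side for the latter. The only (cosmetic) difference is that you partition $L$ while the paper partitions $R$ --- your choice is in fact the one consistent with the inclusion direction $I' \subseteq I$ as written in the claim and in Algorithm~\ref{alg:count-weak}, whereas the paper's proof text silently swaps the roles of the subscripts.
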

    \begin{proof}
        Since~$H$ is $d$-degenerate, we know by Lemma~\ref{lemma:neighbourhood-classes} that~$T$ contains at most~$d|S|$ vertices of degree~$> d$ and that the remaining vertices can be
        partitioned into $3|S|^d$ neighbourhood classes. We define
        $R_{> d} \defeq \{ r_I \in R \mid |I| > d \}$ and
        $R_{\leq d} \defeq R \setminus R_{> d}$. By the above,
        $|R_{> d}| \leq d|S|$ and $|R_{\leq d}| \leq 3|S|^d$.
        
        Similarly, because~$G$ is~$d$-degenerate, so is $G[\tilde S \cup N(\tilde S)]$, therefore the same argument applies to the 
        sets~$L_{> d} \defeq \{ l_I \in L \mid |I| > d\}$
        and~$L_{\leq d} \defeq L \setminus L_{> d}$; namely that
        $L_{> d} \leq d|\tilde S| = d|S|$ and $|L_{\leq d}| \leq 3|\tilde S|^d = 3 |S|^d$.

        We now enumerate all pairs~$r_I, l_{I'}$ with~$I' \subseteq I$ as follows: we test each vertex in~$R_{> d}$ against each vertex
        in~$L$ in time~$O(|R_{>d}| \cdot |L| \cdot |S|) = O(|S|^{d+2})$,
        where we incur a factor of~$O(|S|)$ for the lookup operation, and output those~$O(|S|^{d+1})$ pairs for which the inclusion relationship holds.

        Next, for each vertex~$r_I \in R_{\leq d}$ we test for each
        subset~$I' \subseteq I$ whether~$l_{I'}$ exists and if so we
        output the pair~$r_I, l_{I'}$, this takes times
        $O(d2^d \cdot R_{\leq d}) = O(d2^d |S|^d)$ where we incur
        a factor of~$O(d)$ for looking up vertices by their index set in a suitable data structure. 
    \end{proof}
    \noindent
    To summarize the time needed to construct~$A$:
    constructing both vertex sets takes time~$O(\supp(\jointex) + \supp(Q_{\tilde S}))$ which as observed above is bounded by
    $O(|S|^{d+1})$ and adding the edges between the sets costs~$O(2^d |S|^{d+1})$ which also subsumes the whole construction.

    Now note that any flow on~$A$ that reaches full capacity on all 
    edges incident to~$t$ models one way of distributing vertices from
    the sets~$\{V_{X\subseteq{}}\}_{X \subseteq \tilde S}$ onto $T$-vertices of~$H$ (via the index representation). We can easily test
    this either by inspecting the flow on these edges or checking, as the algorithm does, whether the flow value is equal to
    $\sum_{I \subseteq [s]} \jointex(I)$. Let~$f$ be such a flow,
    then we can compute the number of embeddings that distribute vertices
    according to~$f$ as
    \[
        \prod_{l \in L}
        \frac{\mathsf{supply}[l]!}
        {(\mathsf{supply}[l] - f(sl))!}.
    \]
    To understand why, for each $l$ in $L$ we must decide how to distribute vertices onto the $T$-vertices. In each case, if we assign the first
    $\demand[r_{I_1}]$ vertices to map onto the $T$-vertices with neighbours~$I_1$ in the index representation, then the next $\demand[r_{I_2}]$ vertices to map onto those with neighbours~$I_2$, and so on, we want to consider all possible permutations of the $\supply[l]$ vertices, which is $\supply[l]!$. However, given the actual fixed flow to $l$, $f(sl)$, can be less than the total supply to $l$, by enumerating all these permutations, we will have over counted. Precisely, we will have over counted, in each case by a factor of  $(\supply[l] - f(sl))!$ and therefore we must correct for it. 
    All in all,
    we have  $\frac{\supply[l]!} {(\supply[l] - f(sl))!}$ many choices, taking the product over all vertices in~$L$ then gives us the above formula, which is exactly what the algorithm accumulates in the variable~$k$ in part~\commentmark{3}. 

    Since the algorithm brute-forces all candidate functions~$f \in [p]^{F'}$ where $p = \max_{r \in R} \demand[r]$ and~$F' = F \cap (L \times R)$ and filters those that are not flows or do not have maximum flow value, we conclude that the variable~$t$ at the end of the function indeed contains the value~$|\mathcal H|$.

    The running time of this latter operation is dominated by enumerating and checking all functions~$f$ which takes time
    $
        O( p^{|F|} \cdot |F| ) = O(|H|^{2^d |S|^{d+1}} \cdot 2^d |S|^{d+1}) 
    $
    where we used the trivial bound~$p = \max_{r \in R} \demand[r] \leq |H|$. This running time, as all other running times of earlier parts, are subsumed by the claimed running time.
\end{proof}

\subsection{Counting all occurrences}\label{sec:algorithm}

Now that we have algorithms in place to count occurrence of patterns for
specific sets~$\tilde S$ in the host graph, we want to find potential
candidate set in time better than~$O(|G|^{|S|})$. This is
clearly not possible for all patterns, as shown by the various existing lower
bounds, and the following definitions capture the types of patterns our
algorithm can count. 

\begin{definition}[Left-cover number]
    The left-cover number of a vertex set~$S$ in an ordered graph~$\H$
    is the size of the minimum vertex set~$C$ (not necessarily disjoint
    from~$S$) such that~$S \subseteq N^-_\H[C]$. We denote this number
    by~$\gamma^-_S(\H)$.
\end{definition}

\begin{definition}[$(c,d)$-locatable]
    A set~$S' \subseteq S$ in a pattern~$H$ is \emph{$(c,d)$-locatable} if for every $d$-degenerate ordering $\H$ of~$H$
    it holds that $\gamma^-_{S'}(\H) \leq c$.
    We say that a pattern is \emph{$(c,d)$-locatable} if $S$ is 
    $(c,d)$-locatable.
\end{definition}

\noindent
Note that by this definition, all patterns with degeneracy larger than~$d$ are~$(\any, d)$-locatable. This is intentional: if we ask how often a pattern~$H$ is contained in some $d$-degenerate graph~$G$ and $H$'s degeneracy is larger than~$d$, we can immediately answer this question.
As a consequence, if a pattern
is~$(c,d)$-locatable, then it is $(c',d)$-locatable for every~$c' \geq c$
and $(c,d')$-locatable for every~$d' \leq d$. We note that every pattern is~$(|S|,\infty)$-locatable. Let us next show how the left-cover number can be computed:

\begin{lemma}\label{lemma:left-cover-compute}
    For every pattern~$H$ and $d \in \N$, we can compute the minimum value~$c \in \N$ such that~$H$ is~$(c,d)$-locatable in time~$O\big(d|H| + (cd)^{O((cd)^d)}\big)$.
\end{lemma}
\begin{proof}
    If the pattern~$H$ is not $d$-degenerate then there is nothing to be done,
    we can check this in~$O(\|H\|) = O(d|H|)$ time.
    Otherwise,  by Lemma~\ref{lemma:neighbourhood-classes} we can partition~$T$ into at most~$O(|S|^d)$ neighbourhood classes~$\{T_X\}_{X \subseteq S}$ and
    again this can be done in~$O(|H|+\|H\|) = O(d|H|)$ time.
    Let~$T'$ contain one representative vertex for each class.

    For some given ordering~$\S$ of~$S$, note that the value of the left-cover depends on the location of~$T$-vertices relative to~$\S$, but not on the position of~$T$-vertices relative to each other. Moreover, every minimum left-cover contains at most one vertex of each class~$T_X$.

    Accordingly, in order to determine~$c$ for some fixed ordering~$\S$ 
    of~$S$, it is enough to consider all ways in which~$T'$ can be placed in the~$|S|+1$ places between vertices in~$\S$. However, we must further determine whether this placement is compatible with
    some $d$-degenerate ordering and to do so efficiently we need to refine this idea further.

    Let us call a neighbourhood class~$T_X$ \emph{big} if $|T_X| > d$
    and \emph{small} otherwise. Observe that if~$T_X$ is big, then
    in any $d$-degenerate ordering~$\H$ of~$H$ it is true that
    $\max_{\H} T_X > \max_{\H} X$ since otherwise the vertex~$\max_{\H} X$ would have more than~$d$ left neighbours. In that case, we do not need to know
    where the vertex~$t = \max_{\H} T_X$ is actually placed since already~$N_{\H}^-(t) = N_H(t)$.

    We therefore proceed as follows to determine~$c$. Let~$T_s$ contain
    all vertices in~$T'$ that belong to small classes and let~$T_b$ contain a single representative for each large class.

    We now enumerate each possible way of placing the representatives~$T_s$ in the $|S|+1$ positions between vertices in~$\S$. The number of such possible placements is 
    \[
        |(|S|+1)|^{|T_S|} 
        = O\big((|S|+1)^{4|S|^d}\big)
        = (cd)^{O((cd)^d)}
    \]
    where we used the bound from Lemma~\ref{lemma:neighbourhood-classes} to bound~$|T_s|$ and that~$|S| \leq cd$.
    For each placement, we check whether it corresponds to a $d$-degenerate ordering by checking whether each vertex in~$S \cup T_s$
    has at most~$d$ left neighbours, if not we discard this placement.
    Then, to determine~$c$ for the current placement, we check all subsets of~$S \cup T_s \cup T_b$ of size~$\leq c$ in time~$O(cd \cdot |S \cup T_s \cup T_b|^c) = O\big(cd 4^c |S|^{cd}\big) = O\big(4^c (cd)^{cd+1}\big)$ and note the minimum value for which the subset forms a left-cover of~$S$. The total running time of this procedure is then\
    \[
        O\Big(d|H| + |S|!\big((cd)^{O((cd)^d)}+ 4^c (cd)^{cd+1}\big)\Big) 
        = O\big(d|H| + (cd)^{O((cd)^d)}\big). \qedhere
    \]
\end{proof}

\noindent
The proof of Lemma~\ref{lemma:left-cover-compute} can further be used to compute all relevant representations of a pattern, meaning those that can occur in a $d$-degenerate host graph.

\begin{corollary}\label{corr:degenerate-reps-compute}
    For a pattern~$H$ and an integer~$d \in N$ we can compute the set~$\{ \rep(H, \H[S]) \mid \H~\text{is a $d$-degenerate ordering}\}$ in time $O(d|H| + (cd)^{O(cd^2)})$.
\end{corollary}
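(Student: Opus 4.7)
The plan is to extend the procedure of Lemma~\ref{lemma:left-cover-compute}. The crucial observation is that $\rep(H, \H[S])$ depends only on the ordering that $\H$ induces on $S$ (the $T$-side contributes only through the function $\jointex$ tallying neighbourhoods, which is determined once the labelling of $S$ is fixed), so it suffices to enumerate all orderings of $S$ realisable by some $d$-degenerate ordering of $H$ and read off the representation for each.

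Concretely, I would first invoke Lemma~\ref{lemma:left-cover-compute} to compute the minimum $c$ for which $H$ is $(c,d)$-locatable; if $H$ has degeneracy greater than $d$ the output set is empty. Otherwise $(c,d)$-locatability forces $S \subseteq N^-_\H[C]$ for some $C$ with $|C| \leq c$ in every $d$-degenerate ordering $\H$, so $|S| \leq c(d+1) = O(cd)$. I then enumerate all $|S|! \leq (cd)^{O(cd)}$ orderings $\S$ of $S$ and test each for realisability. Since $T$ is independent, extending $\S$ to a $d$-degenerate ordering of $H$ decomposes vertex-by-vertex: for each $t \in T$ with $k_t = |N_H(t)|$ place $t$ at its latest admissible position---immediately before its $(d{+}1)$-th latest $S$-neighbour in $\S$ if $k_t > d$, and after all of $S$ otherwise. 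This latest placement minimises, independently for each $t$, the set of $S$-vertices receiving a $+1$ contribution to their left-degree from $t$; because $T$-vertices do not interact, this independent optimum is the global one. Realisability is thus equivalent to verifying that under this canonical placement every $s \in S$ has total left-degree at most $d$. Grouping $T$-vertices by neighbourhood class (of which there are at most $4|S|^d = (cd)^{O(d)}$ by Lemma~\ref{lemma:neighbourhood-classes}) and tracking each class with its multiplicity, both the feasibility check and the construction of $\rep(H, \S) = (\H_S, \jointex)$---the ordered graph $\H_S$ is obtained by relabelling $H[S]$ according to $\S$, and $\jointex$ by tallying class multiplicities against subsets $I \subseteq [|S|]$---reduce to iterating over the classes in $(cd)^{O(d)}$ time per ordering, after a one-off $O(d|H|)$ preprocessing pass over $T$ that extracts the classes. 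Each surviving $(\H_S, \jointex)$ is inserted into a trie so duplicates collapse automatically.

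The main subtlety lies in justifying the canonical placement, since one might worry that shifting some $t$'s slightly could globally redistribute load and salvage an otherwise infeasible $\S$. The argument relies on $T$ being independent: each $t$'s position interacts with $S$ only through the set of $S$-vertices it shadows, and any earlier placement strictly enlarges that set. Once this is established, the total running time is $O(d|H|)$ for preprocessing plus $|S|! \cdot (cd)^{O(d)} = (cd)^{O(cd)}$ for the enumeration and representation construction, which is comfortably within the claimed $O(d|H| + (cd)^{O(cd^2)})$ bound.
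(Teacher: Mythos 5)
Your proposal is correct, but it takes a genuinely different route from the paper. The paper's proof of Corollary~\ref{corr:degenerate-reps-compute} simply piggybacks on the enumeration already performed in Lemma~\ref{lemma:left-cover-compute}: for every ordering $\S$ of $S$ it brute-forces all $(|S|+1)^{|T_s|} = (cd)^{O((cd)^d)}$ placements of the small-class representatives among the gaps of $\S$, checks each for $d$-degeneracy, and records $\rep(H,\S)$ whenever some placement survives. You instead exploit the observation (which the paper uses only implicitly) that $\rep(H,\H[S])$ depends on nothing but the order induced on $S$, so realisability of a fixed $\S$ is a pure feasibility question; you resolve it with a canonical greedy placement --- push each $t\in T$ as far right as its own left-degree budget allows --- and justify optimality by the exchange argument that, since $T$ is independent, delaying $t$ weakly shrinks the set of $S$-vertices to whose left-degree $t$ contributes, so the per-vertex optima compose into a global one. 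This replaces the paper's $(cd)^{O((cd)^d)}$ placement enumeration with a $(cd)^{O(d)}$-time test per ordering, yielding $O(d|H|) + |S|!\cdot (cd)^{O(d)} = O(d|H| + (cd)^{O(cd)})$, which is strictly better than both the bound stated in the corollary and the one inherited from the lemma; the only thing the paper's approach buys in exchange is that the explicit placements are needed anyway inside Lemma~\ref{lemma:left-cover-compute} to compute left-covers, so the corollary there comes for free. One small correction: the latest admissible position for a vertex $t$ with $k_t > d$ neighbours is immediately before its $(d{+}1)$-th \emph{earliest} (equivalently, $(k_t-d)$-th latest) $S$-neighbour in $\S$, not its $(d{+}1)$-th latest as you wrote --- with your phrasing $t$ would acquire $k_t - d - 1$ left-neighbours, which exceeds $d$ once $k_t > 2d+1$. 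The surrounding argument makes clear you intend the former, and nothing else in the proof is affected.
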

\begin{proof}
    We proceed as in the proof of Lemma~\ref{lemma:left-cover-compute}. For every ordering~$\S$ of~$S$ and any placement of~$T_s$ among~$
    \S$ we can check whether this placement corresponds to a $d$-degenerate ordering of~$H$. If so, we add the representative
    $\rep(H, \S)$ to the set. 
\end{proof}

\begin{algorithm}[t]
    \DontPrintSemicolon
    \SetNoFillComment

    \KwInput{A $d$-degenerate graph~$G$, a pattern~$H$}
    \KwOutput{$\Count(H \embeds{} G)$ or $\Count(H \embedsweak{} G)$}

	\vspace*{-6pt}
    \varrule[.4pt]

    Compute $d$-degenerate ordering $\G$ of $G$\;
    Initialise and fill subset dictionary~$R$\tcp*{Lemma~\ref{lemma:R}}
    Compute $\aut_S(H)$ \tcp*{Needed in CountStrong/CountWeak}    
    \tcp{Compute index reps.\ of $d$-degenerate orderings and left-cover number}
    Compute minimum $c$ for which~$H$ is~$(c,d)$-locatable \tcp*{Lemma~\ref{lemma:left-cover-compute}}
    Compute the set $\mathcal H = \{ \rep(H, \H[S]) \mid \H~\text{is $d$-degenerate}\}$ \tcp*{Corollary~\ref{corr:degenerate-reps-compute}}
    \tcp{Count frequencies of~$H$ by index representation}
    $k = 0$\;    
    \For{$L \subseteq V(G)$ with $|L| \leq c$}{
        \For{$\tilde S \subseteq N^-_\G[L]$}{
            Compute $Q_{\tilde S}$ from~$R$ \tcp*{Theorem~\ref{thm:data-structure}}
            \For{every $(\H_S,\jointex) \in \mathcal H$}{
                $k = k + \emph{CountStrong/CountWeak${}_d$}(\G, (\H_S, \jointex), \tilde S, Q_{\tilde S}, \aut_S(H))$
            }
        }
    }
    \Return $k$\;
    \caption{\label{alg:count}%
        Algorithm for counting $(c,d)$-locatable patterns in
        $d$-degenerate graphs.
    }
\end{algorithm}

\begin{theorem}\label{thm:count-strong}
    There exists an algorithm that, given
    a $d$-degenerate graph~$G$ and a $(c,d)$-locatable pattern~$H$ as input, computes $\Count(H \embeds{} G)$ in time\footnote{For ease of presentation we assume for the running time that~$d \geq 2$. This is
    justified by the fact that graphs of degeneracy one are trees.}
    $
        O\big(
             (cd)^{O((cd)^d)}
            + (cd)^{2cd+d} |G|^c \big).
    $
\end{theorem}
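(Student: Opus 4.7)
The plan is to verify correctness of Algorithm~\ref{alg:count} and bound its running time. The preprocessing phase---computing the $d$-degenerate ordering $\G$, the subset dictionary $R$, the count $\aut_S(H)$, the locatability parameter $c$, and the set $\mathcal H$ of index representations of $d$-degenerate orderings of $H$---fits into the additive term $(cd)^{O((cd)^d)}$ by combining Lemma~\ref{lemma:R}, Lemma~\ref{lemma:count-S-autos}, Lemma~\ref{lemma:left-cover-compute}, and Corollary~\ref{corr:degenerate-reps-compute}, together with the $O(d 2^d |G|)$ cost of filling $R$.

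The correctness argument rests on $(c,d)$-locatability. Fix a strong embedding $\phi$ and set $\tilde S \defeq \phi(S)$. Pulling back the $\G$-order on $\phi(V(H))$ through $\phi$ yields an ordering $\H$ of $H$ that is $d$-degenerate, because every $v \in V(H)$ satisfies $N_H(v) \subseteq \phi^{-1}(N_G(\phi(v)))$ and $\G$ is $d$-degenerate. Hence $(c,d)$-locatability supplies $C \subseteq V(H)$ with $|C| \leq c$ and $S \subseteq N^-_\H[C]$; setting $L \defeq \phi(C)$ gives $|L| \leq c$ and $\tilde S \subseteq N^-_\G[L]$, so the outer loops visit $(L, \tilde S)$. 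By Corollary~\ref{corr:degenerate-reps-compute}, $\rep(\phi, \G) \in \mathcal H$, and Lemma~\ref{lemma:count-S-strong} certifies that the corresponding CountStrong call returns $\aut_S(H)$ times the number of strong embeddings with image $\tilde S$ and this representation. Summing over $\mathcal H$ for fixed $\tilde S$ thus recovers the full count with $\phi(S) = \tilde S$.

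The main obstacle is \emph{double-counting}: a single $\tilde S$ may be contained in $N^-_\G[L]$ for many distinct $L$ of size $\leq c$, so the naive sum overcounts. I would address this by processing each $\tilde S$ exactly once, \eg by fixing a canonical cover $L(\tilde S)$ (say the lexicographically smallest $L \subseteq V(G)$ with $|L| \leq c$ and $\tilde S \subseteq N^-_\G[L]$ under an arbitrary fixed ordering of $V(G)$) and invoking CountStrong only when the current $L$ equals $L(\tilde S)$, or equivalently by deduplicating $\tilde S$ via a hash set. Verifying that this canonicality check does not exceed polynomial overhead per pair is the delicate step.

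For the running time, the outer loop iterates $O(|G|^c)$ times. For each $L$, $|N^-_\G[L]| \leq c(d+1)$ yields at most $\binom{c(d+1)}{|S|} \leq (cd)^{cd}$ choices of $\tilde S$. Building $Q_{\tilde S}$ via Theorem~\ref{thm:data-structure} costs $O((cd) 2^{cd})$. The set $\mathcal H$ has size at most $|S|! \leq (cd)^{cd}$, and each CountStrong call runs in $O((cd)^{d+1})$ by Lemma~\ref{lemma:count-S-strong}, yielding per-$L$ cost $(cd)^{2cd+d+O(1)}$. Combining gives the claimed $(cd)^{O((cd)^d)} + (cd)^{2cd+d}|G|^c$ bound.
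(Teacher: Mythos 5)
Your proposal follows the paper's proof essentially verbatim: the same Algorithm~\ref{alg:count}, the same use of $(c,d)$-locatability to argue that every $\phi(S)$ lies in $N^-_\G[L]$ for some $L$ with $|L|\le c$, the same appeal to Corollary~\ref{corr:degenerate-reps-compute} and Lemma~\ref{lemma:count-S-strong} (noting that each embedding has a unique representation, so summing over $\mathcal H$ partitions correctly), and the same running-time accounting. The one place you go beyond the paper is the explicit treatment of double-counting when a single $\tilde S$ is covered by several sets $L$; the paper's writeup leaves this deduplication implicit, and your fix (a dictionary of already-processed sets $\tilde S$, which is simpler and cheaper than the canonical-cover variant you also sketch) is the right repair and adds only an $O(|S|)$-factor overhead per pair, so the claimed bound is unaffected.
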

\begin{proof}
    The algorithm is listed as Algorithm~\ref{alg:count}. We first compute a $d$-degeneracy ordering~$\G$ of~$G$ in time~$O(\|G\|)$,  then~$\aut_S(H)$ in time~$O(d|S|^{|S|+d} + \|H\|)$ (Lemma~\ref{lemma:count-S-autos}), the minimum~$c$
    for which $H$ is~$(c,d)$-locatable as well as the set~$\mathcal H$
    in time~$O(d|H| + (cd)^{O((cd)^d)})$ (Lemma~\ref{lemma:left-cover-compute} and Corollary~\ref{corr:degenerate-reps-compute}).

    The algorithm now uses that~$H$ is~$(c,d)$-locatable:
    for each strong embedding of~$\smash{H \smash{\embeds{\phi}} G}$ we can locate the set~$\phi(S)$ in the left-neighbourhood of some set~$L \subseteq V(G)$ of size at most~$c$.
    Therefore, we enumerate all such potential sets~$L$ in time
    $O(|G|^c)$ and for each~$L$ check all~$O((cd)^{|S|})$ candidate sets~$\tilde S \subseteq N^-[L]$. For each set~$\tilde S$, we invoke
    \emph{CountStrong} a total number of~$|\mathcal H| \leq |S|!$ times, resulting in a running time of
    \begin{align*}
        &\phantom{{}={}} O\big(|G|^c  (cd)^{|S|} |S|! |S|^{d+1} \big) 
        = O\big( (cd)^{cd}  (cd)! (cd)^{d+1} |G|^c \big) 
        = O\big( (cd)^{2cd+d} |G|^c \big)
    \end{align*}
    where we used that~$|S| \leq cd$ and that~$\supp(\jointex) \leq 4|S|^d$ by Lemma~\ref{lemma:neighbourhood-classes} for each representative~$(\H_S, \jointex) \in \mathcal H$.
    The total running time is
    \begin{align*}
        &\phantom{{}={}} O(\|G\|) 
            + O(d|S|^{|S|+d} + \|H\|) 
            + O(d|H| + (cd)^{O((cd)^d)}) 
            + O\big( (cd)^{2cd+d} |G|^c \big) \\
          &= O\big(d|G|
            + d|S|^{|S|+d} + d|H|
            + d|H| + (cd)^{O(cd^2)} 
            + (cd)^{2cd+d} |G|^c \big) \\
          &= O\big(
             (cd)^{O((cd)^d)}
            + (cd)^{2cd+d} |G|^c \big) \qedhere
    \end{align*}
\end{proof}

\begin{theorem}\label{thm:count-weak}
    There exists an algorithm that, given
    a $d$-degenerate graph~$G$ and a $(c,d)$-locatable pattern~$H$ as input, computes $\Count(H \embedsweak{} G)$ in time
    $
        O\big((cd)^{O((cd)^d)}
            + cd^2 f(c,d) \cdot |H|^{f(c,d)} \cdot  |G|^c \big)
    $
    where~$f(c,d) = 2^d (cd)^{d+1}$.
\end{theorem}
\begin{proof}
    The proof works exactly as for Theorem~\ref{thm:count-strong}, the only difference being the running time incurred by invoking \textit{CountWeak}${}_d$ instead of \textit{CountStrong} and then using
    that~$|S| \leq cd$ to arrive at the above running time. 
\end{proof}

\subsection{The structure of $(1,d)$-locatable pattern}\label{sec:structure}

While the general structure of these patterns appears to be rich,
the structure of~$(1,d)$-locatable patterns---which are probably the only patterns interesting in practical applications---can be characterised neatly:

\begin{lemma}\label{lemma:1-locatable-structure}
    A pattern~$H$ is~$(1,d)$-locatable and~$d$-degenerate
    iff one of the following holds: 1. $H[S]$ is a clique, or 2.
    $T$ contains~$d+1 - \delta(H[S])$ vertices whose neighbourhood is~$S$.
\end{lemma}
\begin{proof}
    Clearly a pattern~$H$ where~$H[S]$ is a clique (where we include the case~$|S| = 1$) is $(1,d)$-locatable,
    let us therefore focus on the second: let~$A \subseteq T$ contain all vertices in~$T$ whose neighbourhood is~$S$ with~$|A| \geq d+1 - \delta(H[S])$. We claim that in any degeneracy ordering~$\H$ of~$H$, at least one vertex of~$A$ lies to the right of~$S$, \ie $\max A >_\H \max S$, which of course makes~$S$ $(1,d)$-locatable in this ordering. 
    Assume towards a contradiction that in~$\H$ we have~$\max A <_\H \max S$. Let~$x \defeq \max_\H S$, then $A \subseteq N^-(x)$. However,
    $
        |N^-(x)| \geq \delta(H[S]) + d + 1 - \delta(H[S]) = d+1
    $,
    contradicting that~$\H$ is a $d$-degenerate ordering.

    To prove the reverse, assume~$H$ is~$(1,d)$-locatable and~$d$-degenerate. We assume that~$H[S]$ is not a clique, so let again~$A \subseteq T$ contain all vertices whose neighbourhood is all of~$S$.
    We first note that~$A$ cannot be empty since otherwise no single vertex
    can left-cover~$S$. For the same reason, $|S| \leq d$.
    Assume towards a contradiction that~$|A| \leq d - \delta(H[S])$ and
    let~$x \in S$ be a vertex of minimum degree in~$H[S]$. We construct the
    following ordering~$\hat H$ of~$H$ and claim that it is $d$-degenerate but $S$ cannot be left-covered by a single vertex:
    \[
        S\setminus \{s\} ~\big|~ A ~\big|~ s ~\big|~ T \setminus A,
    \]
    where the ordering inside each block is arbitrary. First let us verify that it is indeed $d$-degenerate. The vertices in the first block have at most~$|S|\setminus {s} - 1 = d-2$ left neighbours. The vertices in~$A$ all have the same neighbourhood~$S$, so each vertex in the second block has all of~$|S|\setminus{s} = d-1$ left neighbours.
    Since~$s$ has minimum degree in~$H[S]$ it has 
    $
        \delta(H[S]) + |A| = d
    $
    left neighbours. Finally, all vertices in~$T \setminus A$ see some subset of~$S$ and thus all vertices in the forth block have less than~$d$ left neighbours. Observe now that~$S$ in this ordering cannot be left-covered by a single vertex: $s$ has less than $|S|-1$ neighbours (otherwise~$H[S]$ is a clique) and so has every vertex in $T\setminus A$ by construction of~$A$. We conclude that this pattern is indeed not~$(1,d)$-locatable.
\end{proof}

\noindent
The above characterisation of~$(1,d)$-locatable patterns immediately poses the question whether a similar characterisation holds for~$(1,d)$-locatable \emph{subsets}~$S' \subseteq S$. This turns out not to be true, as we show in the following section. 

\section{Lower bounds}\label{sec:lower-bound}

While the general structure of~$(c,d)$-locatable patterns is complicated,
from a practical perspective we are mostly interested in $(1,d)$-locatable patterns, since these are the \emph{only} large patterns that can be counted in linear time  in degenerate graphs\footnote{Excluding graphs on five or less vertices for which this is still possible.}, assuming the TDC (\cite{beraCountingSix2020}). We complement this lower bound by showing that patterns which are \emph{barely} not~$(1,d)$-locatable---adding one more vertex with neighbourhood~$S$ makes them $(1,d)$-locatable---are $\#W[1]$-hard to count.



\begin{theorem}
    For every~$d \geq 5$ there exists a pattern~$H$ with sides~$S,T$ such that 
    the problem of counting the number of strong or weak containments of~$H$ in
    a $d$-degenerate graphs is
    $\#W[1]$-hard when parametrised by~$|S| + d$. 
    Moreover, the pattern has exactly~$d$ vertices of degree~$|S|$ in~$T$
    and~$S$ is an independent set.
\end{theorem}
\begin{proof}
We prove the theorem by reducing counting cliques in general graphs (the canonical $\# W[1]$-hard problem~\cite{flumCountingHarndess2004}) to counting the number of weak containments of a specific pattern. The proof for strong containments works the same.

Given an instance~$(G',k)$ of the $k$-clique counting problem, we construct a graph $G$ and pattern $H$ such that the number of $k$-cliques in~$G'$ is equal to~$\Count(H \embedsweak{} G) / \aut(H)$.

To construct~$G$ we first subdivide each edge~$uv \in G'$ once. Let~$A$ be the set of all subdivision vertices created this way and let~$B = V(G')$. Note that the subdivided graph is
$2$-degenerate and bipartite with sides~$A, B$.

Next, add a set~$D_A$ of~$d-2$ new vertices with neighbourhood~$A$ each
and a set~$D_B$ of~$d$ new vertices with neighbourhood $B \cup D_A$ each.
The resulting graph is~$d$-degenerate, as witnessed by the following deletion sequence: first we delete the vertices in~$A$ which all have a degree of~$2+|D_A| = d$. In the remaining graph, $B$ is now only connected to~$D_B$ and therefore the vertices in~$B$ all have a degree of~$|D_B| = d$. Then we can delete~$D_A \cup D_B$ in any order since all vertices in the remaining graph have degree~$d$ or~$d-2$.

We generate our pattern $H$ by starting with a $k$-clique and applying the exact same process we used to construct $G$. Let us call the resulting sets~$\hat A$,  and then $\hat B, D_{\hat A}$ and~$D_{\hat B}$ to distinguish them from the sets of~$G$. The sides of~$H$ are~$S = \hat B \cup D_{\hat A}$ and $T = \hat A \cup D_{\hat B}$, note that~$|S| = k + d - 2$ and~$|T| = \frac{1}{2}k(k-1) + d$, so in particular $|S| < |T|$. Note further that only the~$d$ vertices in~$D_{\hat B} \subseteq T$ have all of~$S$ as their neighbourhood.

\begin{claim}
    Let~$H \smash{\embedsweak{\psi}} G$. Then~$\psi(D_{\hat A}) = D_A$ and~$\psi(D_{\hat B}) = D_B$.
\end{claim}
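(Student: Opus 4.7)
My plan is to exploit the bipartite structure of both $G$ and $H$, then combine degree, size, and the $k$-clique subdivision structure inside $H$ to pin the embedding down block by block. First, I observe that $G$ is bipartite with sides $A \cup D_B$ and $B \cup D_A$ (every edge runs between these sides by construction) and $H$ is bipartite with sides $S$ and $T$. Since $H$ is connected and $\psi$ is a weak embedding, $\psi(H)$ is a connected bipartite subgraph of $G$, so $\tilde S \defeq \psi(S)$ lies in one side of $G$'s bipartition and $\tilde T \defeq \psi(T)$ in the other. I rule out the `reversed' case $\tilde S \subseteq A \cup D_B$ via size: vertices of $D_{\hat A} \subseteq S$ have $H$-degree $\binom{k}{2}+d > d$, so their images cannot lie in $A$, forcing $\psi(D_{\hat A}) \subseteq D_B$, and analogously $\psi(\hat B) \subseteq D_B$; but then $|S| = k+d-2 \leq |D_B| = d$, impossible for $k \geq 3$. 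Hence $\tilde S \subseteq B \cup D_A$ and $\tilde T \subseteq A \cup D_B$. Each $D_{\hat B}$-vertex has $H$-degree $k+d-2 > d$, so its image cannot lie in $A$ (degree $d$), giving $\psi(D_{\hat B}) \subseteq D_B$; equality follows from $|D_{\hat B}| = d = |D_B|$, and by complementation $\psi(\hat A) \subseteq A$.

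For the second equality, I assume towards a contradiction that some $v^* \in D_{\hat A}$ has image $b^* \in B$. Since $v^*$ is $H$-adjacent to every vertex of $\hat A$, the image $b^*$ must be $G$-adjacent to all of $\psi(\hat A) \subseteq A$; the $A$-neighbours of any $b \in B$ are precisely the subdivision vertices of $G'$-edges incident to $b$, so $\psi(\hat A) \subseteq N(b^*) \cap A$. If two distinct $D_{\hat A}$-vertices mapped to distinct $b_1, b_2 \in B$, then $\psi(\hat A) \subseteq N(b_1) \cap N(b_2) \cap A$ would have size at most one (the subdivision of the edge $\{b_1, b_2\}$, if any), contradicting $|\psi(\hat A)| = \binom{k}{2} \geq 3$. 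So at most one element of $D_{\hat A}$ has its image in $B$.

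The final step combines the subdivision structure of $\hat A$ with a counting argument. Let $P \defeq \{i : \psi(\hat b_i) \in B\}$. For any $i, j \in P$, the image $\psi(\hat a_{ij}) \in N(b^*) \cap A$ is the subdivision of some $G'$-edge $\{b^*, b'_{ij}\}$ whose only $B$-neighbours are $b^*$ and $b'_{ij}$; since $\psi(\hat b_i), \psi(\hat b_j)$ must both be $G$-adjacent to $\psi(\hat a_{ij})$, we get $\{\psi(\hat b_i), \psi(\hat b_j)\} = \{b^*, b'_{ij}\}$. A pigeonhole argument on three indices in $P$ forces two $\hat b$'s to share the value $b^*$, violating injectivity, so $|P| \leq 2$. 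The capacity bound $|\tilde S \cap D_A| \leq |D_A| = d-2$, split as $(k - |P|) + (d - 3) \leq d - 2$, then yields $|P| \geq k - 1$; for $k \geq 4$ this clashes with $|P| \leq 2$, providing the desired contradiction. The main obstacle is the tight boundary $k = 3$, where the counting inequalities alone degenerate: there $|P| = 2$ is forced, and the argument is closed by a direct collision check showing that the adjacency constraints on $\psi(\hat a_{23})$ pin it to the subdivision of the same $G'$-edge $\{b^*, b'_{12}\}$ as $\psi(\hat a_{12})$, again violating injectivity.
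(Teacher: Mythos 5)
Your proof is correct (for $k \geq 3$, which is all the reduction needs) and takes a genuinely different route from the paper's. The paper's own argument is a two-line degeneracy obstruction: $D_{\hat A} \cup D_{\hat B}$ induces a $K_{d-2,d}$, the graph $G - (D_A \cup D_B)$ is $2$-degenerate and hence contains no $K_{d-2,d}$ once $d \geq 5$, and the sides cannot be swapped because $|D_A| \neq |D_B|$. You instead work locally: you fix the orientation of the bipartition by degree and cardinality counting, pin $\psi(D_{\hat B})$ to $D_B$ by degree (which forces $\psi(\hat A) \subseteq A$), and then exclude $\psi(D_{\hat A}) \cap B \neq \emptyset$ through the adjacency constraints that $\psi(\hat A)$ and $\psi(\hat B)$ impose on a hypothetical image $b^* \in B$. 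This extra care buys something real: the degeneracy argument only shows that the image of the $D$-biclique meets $D_A \cup D_B$, and $G$ does in fact contain copies of $K_{d-2,d}$ whose sides are not $(D_A, D_B)$ (for $d = 5$, take a subdivision vertex of an edge $b_1b_2$ together with two $D_B$-vertices on one side, and $D_A \cup \{b_1, b_2\}$ on the other), so the constraints coming from $\hat A$ and $\hat B$ that you exploit are genuinely needed to pin the embedding down. One simplification: since $\psi$ is injective and $\psi(v^*) = b^*$, no $\hat b_i$ can map to $b^*$, so the forced identity $\{\psi(\hat b_i), \psi(\hat b_j)\} = \{b^*, b'_{ij}\}$ for $i, j \in P$ is already a contradiction; this gives $|P| \leq 1$ outright, makes the pigeonhole on three indices unnecessary, and removes the separate $k = 3$ case entirely.
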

\begin{proof}
    Simply note that~$D_A, D_B$ induces a complete bipartite graph~$K_{d-2,d}$. The remainder of the graph~$G - (D_A \cup D_B)$ is~$2$-degenerate and since~$d \geq 5$ we know that it cannot contain~$K_{d-2,d}$. Since
    $|D_A| \neq |D_B|$ the claim follows.
\end{proof}

\noindent
As a consequence, any weak embedding will also satisfy~$\psi(\hat A) \subseteq A$ and~$\psi(\hat B) \subseteq B$. In fact~$\psi(\hat B)$ already completely determines~$\psi(\hat A)$ since for~$\hat b_1, \hat b_2 \in \hat B$ the vertex~$\hat a \in \hat A$ with~$N_H(\hat a) \setminus D_{\hat A}= \{\hat b_1, \hat b_2\}$ necessarily must be mapped onto~$a \in A$ with~$N_G(a)\setminus D_A = \{\psi(b_1), \psi(b_2)\}$. 

To now relate the number of $k$-cliques in~$G'$ to the number of weak embeddings of~$H$ into~$G$, we consider the following equivalence classes of embeddings:
For~$X \subseteq B$ let~$\Psi_X = \{ H \smash{\embedsweak{\psi}} G \mid \psi(\hat B) = X \}$ contain all weak embeddings of~$H$ that map~$\hat B$ onto~$X$. 

\begin{claim}
    For all~$X \subseteq B$, either $|\Psi_X| = 0$ or $|\Psi_X| = k! \cdot (d-2)! \cdot d! = \aut(H)$.
\end{claim}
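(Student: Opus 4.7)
The plan is to split into two cases based on whether $X$ induces a $k$-clique in $G'$. First I would argue that if $X$ is not a $k$-clique in $G'$, then $|\Psi_X| = 0$. Recall that each subdivision vertex $\hat a \in \hat A$ has exactly one pair $\{\hat b_1, \hat b_2\} \subseteq \hat B$ among its neighbours outside $D_{\hat A}$, and by construction that pair corresponds to an edge of the original clique in $H$. A weak embedding must send the edges $\hat a \hat b_1, \hat a \hat b_2$ to edges of $G$, so $\psi(\hat a)$ must be a common $A$-neighbour of $\psi(\hat b_1), \psi(\hat b_2)$; by construction of $G$ this requires $\psi(\hat b_1)\psi(\hat b_2) \in E(G')$. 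If some pair in $X$ is a non-edge in $G'$, no embedding exists.

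In the second case, suppose $X$ induces a $k$-clique in $G'$. The preceding claim of the paper already forces $\psi(D_{\hat A}) = D_A$ and $\psi(D_{\hat B}) = D_B$, and we are fixing $\psi(\hat B) = X$. I would count the remaining choices as follows. There are exactly $k!$ bijections $\hat B \to X$. Once such a bijection is chosen, the mapping $\hat A \to \psi(\hat A)$ is completely forced: each $\hat a \in \hat A$ has a unique neighbour pair $\{\hat b_1, \hat b_2\}$ in $\hat B$, and the corresponding subdivision vertex of the edge $\psi(\hat b_1)\psi(\hat b_2)$ in $G$ is unique in $A$. For the remaining vertices, since every vertex of $D_{\hat A}$ has neighbourhood $\hat A$ in $H$ and every vertex of $D_A$ is adjacent to all of $A \supseteq \psi(\hat A)$, any of the $(d-2)!$ bijections $D_{\hat A} \to D_A$ extends the partial map to a valid weak embedding. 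Analogously, the $d!$ bijections $D_{\hat B} \to D_B$ all work because each vertex of $D_B$ is adjacent to all of $B \cup D_A \supseteq X \cup D_A$. Multiplying yields $|\Psi_X| = k! \cdot (d-2)! \cdot d!$.

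Finally, to establish the equality $\aut(H) = k! \cdot (d-2)! \cdot d!$, I would run the same counting directly on $H$: any automorphism of $H$ must fix the partition $\{\hat B, \hat A, D_{\hat A}, D_{\hat B}\}$ (since these classes are distinguished by degree and neighbourhood structure, as used implicitly in the preceding claim), the $k!$ permutations of $\hat B$ each extend uniquely through $\hat A$, and the twin classes $D_{\hat A}$ and $D_{\hat B}$ contribute the factors $(d-2)!$ and $d!$. The only mildly delicate point is ensuring weak embeddings impose no hidden constraints; this is handled by observing that $D_A$ and $D_B$ are adjacent to everything needed, so edge preservation is automatic after the vertex map is fixed.
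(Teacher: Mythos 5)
Your proposal is correct and follows essentially the same route as the paper: fix the bijection $\hat B \to X$ ($k!$ choices), observe that the map on $\hat A$ is forced by the subdivision structure, and multiply by the $(d-2)!$ and $d!$ free permutations of the twin classes $D_{\hat A}$ and $D_{\hat B}$, with the same counting giving $\aut(H)$. You simply spell out details the paper leaves implicit (why the $\hat A$-map is forced and why every choice yields a valid weak embedding), plus the non-clique case that the paper defers to the surrounding text.
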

\begin{proof}
    Simply note that once the embeddings of~$\hat B$ onto~$X$ is fixed, we have $(d-2)! \cdot d!$ ways in which to map~$D_{\hat A}$ onto~$D_A$
    and~$D_{\hat B}$ onto~$D_B$. Since the embeddings of~$\hat B$ onto~$X$
    can be permuted in~$|\hat B|! = k!$ ways, the claim follows. By a very similar argument we see that~$\aut(H)$ takes the claimed value.
\end{proof}

It is now easy to see that~$X \subseteq V(G')$ induces a $k$-clique iff $\Psi_X$ is non-empty. Accordingly, we have that the number of~$k$-cliques in~$G'$ is exactly $|\{ X \subseteq V(G') \mid \Psi_X \neq \emptyset\}| = \Count(H \embedsweak{} G) / \aut(H)$ and it follows that counting weak embeddings patterns in~$d$-degenerate graphs is indeed~$\# W[1]$-hard. 
\end{proof}

\noindent
The characterisation $(1, d)$-locatable patterns (Lemma~\ref{lemma:1-locatable-structure}) poses the
question whether a similar characterisation holds for $(1, d)$-locatable subsets $S'\subseteq S$. In one direction this clearly holds, \ie if a subset~$S'$ has sufficiently many neighbours in~$T$ than one of them must be placed to the right of~$S'$ in any $d$-degenerate ordering:

\begin{corollary}\label{cor:1-locatable-set}
    If~$S' \subseteq S$ has at least~$d+1 - \delta(H[S'])$ joint neighbours or
    if~$H[S']$ is a clique then~$S'$ is $(1,d)$-locatable.
\end{corollary}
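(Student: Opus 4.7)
The plan is to mirror the forward direction of Lemma~\ref{lemma:1-locatable-structure}, but applied to a generic subset $S' \subseteq S$ instead of all of $S$. In both cases I need to show that for an arbitrary $d$-degenerate ordering $\H$ of $H$ there is a single vertex $v \in V(H)$ with $S' \subseteq N^-_\H[v]$.

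First I would dispose of the clique case. If $H[S']$ is a clique, simply take $v = \max_\H S'$. Every other vertex of $S'$ is a neighbour of $v$ (by the clique property) and precedes $v$ in $\H$, hence lies in $N^-_\H(v)$. So $S' \subseteq N^-_\H[v]$, proving $\gamma^-_{S'}(\H) \leq 1$ in this ordering.

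For the second case, let $A \subseteq T$ be the set of joint neighbours of $S'$, so by assumption $|A| \geq d+1 - \delta(H[S'])$. I would argue by contradiction: suppose every vertex of $A$ satisfies $a <_\H \max_\H S'$, and set $x \defeq \max_\H S'$. Since every $a \in A$ is adjacent to $x$ (being a joint neighbour of $S'$) and lies before $x$, we have $A \subseteq N^-_\H(x)$. Moreover, since $x$ has at least $\delta(H[S'])$ neighbours inside $S' \setminus \{x\}$ and all of those lie before $x$ in $\H$, they also belong to $N^-_\H(x)$. The two contributions are disjoint because $A \subseteq T$ and $S' \subseteq S$, so $|N^-_\H(x)| \geq \delta(H[S']) + |A| \geq d+1$, contradicting the $d$-degeneracy of $\H$. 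Hence some $v \in A$ satisfies $v >_\H \max_\H S'$, and then every element of $S'$ is a neighbour of $v$ occurring earlier in $\H$, so $S' \subseteq N^-_\H(v) \subseteq N^-_\H[v]$.

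Since the ordering $\H$ was an arbitrary $d$-degenerate ordering, we obtain $\gamma^-_{S'}(\H) \leq 1$ in all such orderings, which is exactly the definition of $S'$ being $(1,d)$-locatable. There is no real obstacle here: the proof is essentially identical to the first half of Lemma~\ref{lemma:1-locatable-structure}, and the only mild point to flag is that $A$ and $S' \setminus \{x\}$ contribute disjoint left-neighbours of $x$, which is immediate from $A \cap S = \emptyset$.
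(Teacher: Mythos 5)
Your proof is correct and follows exactly the paper's intended argument: the corollary is stated without a separate proof precisely because it is the forward direction of Lemma~\ref{lemma:1-locatable-structure} applied verbatim to the subset $S'$, which is what you do. The only point worth noting is your (harmless) assumption that the joint neighbours lie in $T$; the disjointness of $A$ from $S'\setminus\{x\}$ holds in any case, so the argument goes through unchanged.
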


\noindent 
However, this is not the only way in which a subset can be located.

\begin{wrapfigure}{l}{0.37\textwidth}
\begin{tikzpicture}[scale=.91]
    \node[vertex] (s1) at (1,0) {$s_1$};
    \node[vertex] (s2) at (2,0) {$s_2$};
    \node[vertex] (s3) at (3,0) {\phantom{$s_3$}};
    \node[vertex] (s4) at (4,0) {\phantom{$s_4$}};

    \node[vertex] (t1) at (1,-1.5) {$t_1$};
    \node[vertex] (t2) at (2,-1.5) {$t_2$};
    \node[vertex] (t3) at (3,-1.5) {$t_3$};
    \node[vertex] (t4) at (4,-1.5) {$t_4$};
    \node[vertex] (t5) at (5,-1.5) {$t_5$};

    \draw[black edge] (s1) -- (t1);
    \draw[black edge] (s1) -- (t2);
    \draw[black edge] (s2) -- (t1);
    \draw[black edge] (s2) -- (t2);

    \draw[black edge] (t3) -- (s1);
    \draw[black edge] (t3) -- (s3);
    \draw[black edge] (t3) -- (s4);
    \draw[black edge] (t4) -- (s2);
    \draw[black edge] (t4) -- (s3);
    \draw[black edge] (t4) -- (s4);
    \draw[black edge] (t5) -- (s2);
    \draw[black edge] (t5) -- (s3);
    \draw[black edge] (t5) -- (s4);    

    \node[right of=s4, node distance=1cm] {$S$};
    \begin{scope}[on background layer]
        \node[box,fit=(s1)(s2)] (Sprime) {};
        \node[left of=Sprime, node distance=1.5cm, color=cardinal] {$S'$};
        \node[box,fit=(t1)(t2),color=gray] (A) {};
        \node[left of=A, node distance=1.5cm, color=gray] {$A$};
    \end{scope}
\end{tikzpicture}
\end{wrapfigure}
\noindent The set~$S'$ in the example on the left is a $(1,2)$-locatable subset in a $2$-degenerate pattern.
The minimum degree inside~$S'$ is zero, hence we would expect to need at least~$3$ vertices in~$T$ to locate it according to the bound provided by Corollary~\ref{cor:1-locatable-set}. However, note that any $2$-degenerate ordering must place one vertex of the marked set~$A$ as the rightmost vertex (since they are the only vertices of degree two). Accordingly, $S'$ is $(1,2)$-locatable despite $A$ being smaller than the bound.

Let us now briefly turn our attention to larger values of~$c$.
One easy way of constructing a $(c,d)$-locatable pattern is by covering~$S$ with~$c$ sets that are each~$(1,d)$-locatable. This `naturally' occurs when there are sufficiently many vertices in~$T$ with pairwise distinct neighbourhoods and large enough degree:

\begin{lemma}
    Let~$H$ be a pattern, $d \in \N$ and $p \leq |S| - d$ such that in~$H$ 
    all subsets~$X \in {S \choose p}$ have at least one dedicated joint neighbour~$t_X$, meaning for any two distinct sets~$X_1, X_2 \in {S \choose p}$ we have~$t_{X_1} \neq t_{X_2}$. Then~$H$ is~$(\lceil |S| / (p-1) \rceil,d)$-locatable.
  \end{lemma}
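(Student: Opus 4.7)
The plan is to show that for an arbitrary $d$-degenerate ordering $\H$ of $H$ (if none exists, the lemma is vacuous), the set $S$ admits a left-cover of size at most $\lceil |S|/(p-1) \rceil$. The whole argument reduces to one key claim, from which a straightforward partition argument finishes everything: every $(p-1)$-subset $Y \subseteq S$ can be left-covered in~$\H$ by a \emph{single} vertex of~$H$.

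Establishing this claim is the main obstacle, and it is where both hypotheses of the lemma are used simultaneously. Fix $Y \subseteq S$ with $|Y| = p-1$ and set $y^\star \defeq \max_\H Y$. For each of the $|S \setminus Y| = |S|-p+1$ choices of $z \in S \setminus Y$, the set $Y \cup \{z\}$ is a $p$-subset of~$S$ and therefore carries a dedicated joint neighbour $t_z \in T$; because dedication holds across distinct $p$-subsets, the vertices $\{t_z\}_{z \in S \setminus Y}$ are pairwise distinct. I would argue by contradiction: if $t_z <_\H y^\star$ for \emph{every}~$z$, then each $t_z$ is adjacent to $y^\star \in Y$ and lies to its left, so $y^\star$ has at least $|S|-p+1$ distinct left-neighbours in $T$. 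The hypothesis $p \leq |S|-d$ rewrites as $|S|-p+1 \geq d+1$, which contradicts $\Delta^-(\H) \leq d$. Hence some $z$ satisfies $t_z >_\H y^\star$, and for this choice $t_z$ is adjacent to every vertex of $Y$ and lies to the right of all of them, so $Y \subseteq N_\H^-(t_z) \subseteq N_\H^-[t_z]$ and $t_z$ left-covers~$Y$.

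Given the key claim, I would partition $S$ into $k \defeq \lceil |S|/(p-1) \rceil$ blocks $Y_1, \ldots, Y_k$ of size at most $p-1$. Any under-sized block is padded to exactly $p-1$ vertices using spare elements of~$S$, which exist because $|S| \geq p+d > p-1$. Applying the key claim to each padded block produces vertices $v_1, \ldots, v_k$ with $Y_i \subseteq N_\H^-[v_i]$, so $C \defeq \{v_1, \ldots, v_k\}$ satisfies $S \subseteq N_\H^-[C]$ and $|C| \leq k$. Since $\H$ was an arbitrary $d$-degenerate ordering, $\gamma^-_S(\H) \leq \lceil |S|/(p-1) \rceil$ for every such ordering and $H$ is $(\lceil |S|/(p-1) \rceil, d)$-locatable. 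The pigeonhole step in the key claim is the only substantive content; the partitioning is routine bookkeeping.
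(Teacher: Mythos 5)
Your proof is correct and follows essentially the same route as the paper: both arguments observe that each $(p-1)$-subset of $S$ acquires $|S|-p+1 \geq d+1$ pairwise-distinct dedicated joint neighbours (one per element of $S$ outside the subset), conclude that such a subset is $(1,d)$-locatable, and then cover $S$ by $\lceil |S|/(p-1)\rceil$ such subsets. The only cosmetic difference is that you re-derive the pigeonhole step (some dedicated neighbour must lie to the right of $\max_\H Y$, else $\max_\H Y$ has more than $d$ left-neighbours) inline, whereas the paper delegates it to Corollary~\ref{cor:1-locatable-set}.
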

\begin{proof}
	Consider a subset~$S' \subseteq S$ of size~$p-1$. Then~$S'$ has 
	at least
	$
	|S \setminus S'| = |S|-p+1
	$
	joint neighbours among the vertices $\{t_X\}_{X \in {S \choose t}}$.
	Since~$p \leq |S| - d$ this translates to the set~$S'$ having at least~$d+1$ neighbours, which makes it~$(1,d)$-locatable. Since we can cover~$S$
	with~$\lceil |S| / (p-1) \rceil$ sets of size~$p-1$, the claim follows. 
\end{proof}

\noindent
Since~$(c,d)$-locatable patterns are somewhat similar graphs with domination number~$c$, we do not expect a simple characterisation for~$(c,d)$-locatable graphs for larger values of~$c$.

\section{Beyond simple locatability}

Let us briefly outline that the approach outlined in Section~\ref{sec:algorithm} can be improved in certain cases which hints at the possibility of a useful classification of patterns that goes beyond $(c,d)$-locatability. 

\begin{wrapfigure}{r}{0.4\textwidth}
   \vspace*{-0.5\baselineskip}  
  \centering
    \includegraphics[width=.4\textwidth]{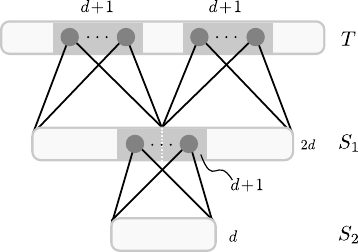}
  \caption{\small A pattern that benefits from `progressive' locating}
  \label{fig:progressive}
  \vspace*{-2.3\baselineskip}  
\end{wrapfigure}

Consider the following example pattern with vertex sets~$T$, $S_1$, $S_2$ with the following properties (see Figure~\ref{fig:progressive}):
\begin{itemize}
    \item $S_1$ has size~$2d$ and $S_2$ has size~$d$,
    \item there exist two disjoint bicliques $K_{d+1,d}$ between $T$ and~$S_1$
          with the larger side in~$T$ and the two smaller sides together covering~$S_1$, and
    \item there exists a $K_{d+1,d}$ biclique between $S_1$ and~$S_2$ with the 
          larger side in~$S_1$ and the smaller side covering~$S_2$.
\end{itemize}
Other than the above, the edges in~$S_1 \cup S_2$ and between $S_1 \cup S_2$ and $T$ are arbitrary as long as the resulting pattern is still $d$-degenerate. Note that as before, once~$S_1 \cup S_2$ is located we can count the number of strong or weak patterns using the above algorithms, so let us here focus only on the locating part of the algorithm.

By the above construction, the set~$S_1$ is~$(2,d)$-locatable and the set~$S_2$ is~$(1,d)$-locatable. Therefore~$S_1 \cup S_2$ is $(3,d)$-locatable and we can find it in time~$O(n^3)$ in a $d$-degenerate host graph. However, we can also locate the set~$S_1 \cup S_2$ as follows: first we guess two vertices in~$T$ which locate~$S_1$ in~$O(n^2)$ time and then we guess a vertex from~$S_1$ that locates~$S_2$, taking in total~$O(n^2 |S_1|) = O(dn^2)$ time, thus improving by a factor of~$O(n/d)$.

This example can easily be extended to create patterns that are~$(2c+1,d)$-locatable but allow a `progressive' locatability in time~$O(n^{c+1} |S_1|^{c})$ or even `deeper' examples where~$S_1$ helps to locate~$S_2$ which helps locate a further set~$S_3$ \etc. This small toy example clearly demonstrates that the variations of locatability could be of further interest. We leave a formalisation of these ideas for future work.

\section{Conclusion}

We have initiated the study of counting \emph{large} subgraphs in
$d$-degenerate graphs and provided some positive and negative findings related
to $(c,d)$-locatable patterns. We further provided a characterisation of
$(1,d)$-locatable patterns which our algorithms count with only a linear
dependence on~$|G|$. In the future we plan on implementing and testing these
algorithms on real-world networks, as well as investigating what type of patterns might be of interest in practical applications.

On the theoretical side, our main open question is whether $(1,d)$-locatable patterns are the only patterns (modulo small patterns) that can be counted in linear time in $d$-degenerate graphs. Further, we ask whether $(2,d)$-locatable patterns still have a simple characterisation. Finally, we note that our algorithm (Lemma~\ref{lemma:left-cover-compute}) to decide whether a pattern is~$(c,d)$-locatable has a surprisingly bad dependence on~$c$ and~$d$. While for practical applications this should hardly matter, as~$c$ needs to be small anyway, it would be interesting to know whether better dependence on these parameters is possible.

\bibliographystyle{abbrvnat}
\bibliography{biblio}

\end{document}